\newtheorem{thm}{Theorem}[section]
\newtheorem*{thm*}{Theorem}
\newtheorem{lemma}[thm]{Lemma}
\newtheorem{prop}[thm]{Proposition}
\newtheorem*{rmk*}{Remark}
\begin{document}

\title
      {An estimate for the average spectral measure of random band matrices}
\author{Sasha Sodin}


\maketitle

\begin{abstract}
For a class of random band matrices of band width $W$, we prove regularity of the average
spectral measure at scales $\epsilon \geq W^{-0.99}$, and find its asymptotics at these
scales.
\end{abstract}

\section{Introduction}

\footnotetext[1]{\scriptsize School of Mathematics,
Institute for Advanced Study, Einstein Dr., Princeton, NJ 08540, USA. \\
E-mail: sodinale@ias.edu.\\
Supported by NSF under agreement DMS-0635607.}

Define a random operator $H$ on $\ell^2(\mathbb{Z})$ via
\begin{equation}
H(u, v) =
    \begin{cases}
        \frac{\pm 1}{2 \sqrt{2W-1}},     &u < v \leq u+W \\
        H(v, u),   &u > v \geq u-W \\
        0,         &\text{otherwise}~,
    \end{cases}
\end{equation}
where the random signs are independent, and the width $W \in \mathbb{N}$ is a large parameter. The integrated density of states
$N: \mathbb{R} \to [0, 1]$ is defined by
\begin{equation}
N(E_0) = \lim_{\epsilon \to 0} \, \frac{1}{\pi}  \Big\langle \int_{-\infty}^{E_0} \Im (H - E - i \epsilon)^{-1}(0, 0)  \, dE \, \Big\rangle~,
\quad E_0 \in \mathbb{R}~,
\end{equation}
where $\langle \cdot \rangle$ denotes average over the randomness. It is known that
the limit exists for almost every $E_0 \in \mathbb{R}$; it is equal to the distribution function
of the average spectral measure of $H$ (corresponding to the vector $\delta_0$.)

If $N$ is (Radon--)differentiable, its derivative is called the density of states, and is denoted
by $\rho$. In this case,
\begin{equation}
\rho(E_0) = \lim_{\epsilon \to +0} \frac{1}{\pi} \Big\langle \Im (H - E_0 - i \epsilon)^{-1}(0, 0)  \, \Big\rangle~;
\end{equation}
the existence of the density of states is equivalent to the existence of the limit, and
to the boundedness of the expression under the limit. It is believed that the density
of states exists for any $E$, is bounded uniformly in $W$, and admits an asymptotic series
\begin{equation}\label{eq:formal}
\rho(E_0) \sim a_0(E_0) + \frac{a_1(E_0)}{W} + \frac{a_2(E_0)}{W^2} + \cdots~,
\end{equation}
where for example
\begin{equation}\label{eq:a0}
a_0(E_0) = \frac{2}{\pi} \sqrt{1 - E_0^2}~.
\end{equation}
In fact, there is a natural perturbation expansion using the so-called self-energy
renormalisation that yields the terms
of (\ref{eq:formal}) one by one; it is equivalent to the one we describe in Section~\ref{s:meth}
(see Spencer \cite{Sp} for the precise definitions.)

However, even the first terms of the series (\ref{eq:formal}) have not yet been rigorously
justified, and it is still unknown whether the density of states exists and is bounded uniformly
in $W \to \infty$. If the entries of $H$ were replaced with random variables with
absolutely continuous distribution with bounded density, Wegner's estimate would show that
the density of states exists; however, even in this case, Wegner's argument only yields the width-dependent
bound $\rho(E_0) \leq C \sqrt{W}$.

We are interested in a simpler problem, namely, the behaviour of the expression
$\Big\langle (H - E_0 - i \epsilon)^{-1}(0, 0)  \, \Big\rangle$ for small $\epsilon > 0$
(depending on $W$.)

\begin{thm*} For $-1 < E_0 < 1$ and $ \epsilon \geq W^{-0.99} $~,
\begin{equation}\label{eq:1term}
  \left|\Im  \Big\langle (H - E_0 - i \epsilon)^{-1}(0, 0)  \,
    \Big\rangle - \Im \int \frac{a_0(E) dE}{E-E_0-i\epsilon} \right|
    \leq \frac{C(E_0)}{W}~,
\end{equation}
with $a_0$ as in (\ref{eq:a0}), and $C(E_0)>0$ independent of $W$ and bounded on any interval
$(-1+\delta, 1-\delta)$. In particular,
\begin{equation}\label{eq:est}
\left| \Im \Big\langle (H - E_0 - i \epsilon)^{-1}(0, 0)  \, \Big\rangle \right| \leq C(E_0)~.
\end{equation}
\end{thm*}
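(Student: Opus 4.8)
The plan is to show that the averaged diagonal resolvent $m_W(z):=\langle (H-z)^{-1}(0,0)\rangle$, for $z=E_0+i\epsilon$, almost satisfies the quadratic equation defining the semicircle Stieltjes transform, and then to exploit the stability of that equation. Write $G=(H-z)^{-1}$ and $m_{\mathrm{sc}}(z)=\int\frac{a_0(E)\,dE}{E-z}$; by (\ref{eq:a0}), $m_{\mathrm{sc}}$ is the unique solution of $\tfrac14 w^2+zw+1=0$ with $\Im w>0$, and $|m_{\mathrm{sc}}(z)|\le 2$ on $\{\Im z>0\}$. Since $\Im m_W(z)=\langle\Im G(0,0)\rangle=\epsilon\,\langle\sum_v|G(v,0)|^2\rangle>0$, the value $m_W(z)$ automatically lies in the upper half-plane, so once we know that $\tfrac14 m_W^2+zm_W+1$ is small this will pin $m_W$ near $m_{\mathrm{sc}}$ rather than near the conjugate root, and (\ref{eq:1term})--(\ref{eq:est}) will follow.

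\emph{Step 1 (approximate self-consistent equation).} Take the $(0,0)$ entry of $(H-z)G=I$ and average: $z\,m_W=-1+\sum_{0<|v|\le W}\langle H(0,v)\,G(v,0)\rangle$. Decouple the symmetric $\pm$-variable $H(0,v)$ from $G(v,0)$ by a short resolvent expansion in that single entry; because odd moments of $H(0,v)$ vanish and $\langle H(0,v)^2\rangle=\tfrac{1}{4(2W-1)}$, the leading term is $-\tfrac{1}{4(2W-1)}\langle G(v,v)G(0,0)+G(v,0)^2\rangle$, with remainder $O(W^{-2})$ times products of four resolvent entries (plus terms of the same type from restoring the removed entries). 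Using that the law of $H$ is translation invariant, so $\langle G(v,v)\rangle=m_W$, and that there are $2W$ relevant $v$'s, this rearranges to
\[
  \tfrac14\,m_W(z)^2+z\,m_W(z)+1=\eta_W(z),
\]
where $\eta_W$ gathers four contributions: the $O(1/W)$ term $-m_W^2/(4(2W-1))$ coming from $\tfrac{2W}{4(2W-1)}-\tfrac14$; the covariance sum $-\tfrac{1}{4(2W-1)}\sum_v(\langle G(v,v)G(0,0)\rangle-m_W^2)$; the off-diagonal sum $-\tfrac{1}{4(2W-1)}\sum_v\langle G(v,0)^2\rangle$; and the higher decoupling remainders. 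Iterating the decoupling on these error terms reorganises $\eta_W$ as the self-energy expansion of Section~\ref{s:meth}, in which every contribution past the tadpole carries a factor $1/W$ produced by a forced coincidence of summation indices.

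\emph{Step 2 (the error bound — the main obstacle).} I expect the crux to be proving $|\eta_W(z)|\le C(E_0)/W$ \emph{uniformly} for $\epsilon\ge W^{-0.99}$. The naive estimates degrade as $\epsilon\to0$: the Ward identity $\sum_v|G(v,0)|^2=\Im G(0,0)/\epsilon$ with $|\Im G(0,0)|\le 1/\epsilon$ only gives $\tfrac1W\sum_v\langle|G(v,0)|^2\rangle\lesssim\tfrac{1}{W\epsilon}$, which at $\epsilon=W^{-0.99}$ is a mere $W^{-0.01}$. Closing this gap needs, first, a priori control of $G$, obtained by bootstrapping from the deterministic bound $\|H\|\le\sqrt W$ (so the averaged spectral measure sits in $[-\sqrt W,\sqrt W]$ and all moments $\langle(H^k)(0,0)\rangle$ are finite), the trivial $|\Im G|\le 1/\epsilon$, and the Ward identity; second, the oscillatory cancellation in $\sum_v\langle G(v,0)^2\rangle$ (the \emph{square}, not the modulus squared) together with the smallness of the covariances in the diffusive range $\epsilon\gg W^{-2}$; and third, control of the truncation of the self-energy expansion at order $\sim 1/\epsilon\le W^{0.99}$, using that a correction of order $n$ is bounded by $n^{O(1)}W^{-1}$ times the $n$-th semicircle moment and that the corresponding generating function and its derivatives stay bounded on the relevant subdisk precisely when $E_0$ is bounded away from $\pm1$ — which is why $C(E_0)$ blows up only near the edge, and why the threshold $\epsilon\ge W^{-0.99}$ (well inside the diffusive range and matching the reach of the expansion) appears. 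In moment language this is the statement $\langle(H^{2n})(0,0)\rangle=\int E^{2n}a_0(E)\,dE+O(n^{O(1)}W^{-1}\int E^{2n}a_0)$ for $n\lesssim 1/\epsilon$, $\langle(H^{2n+1})(0,0)\rangle=0$, and super-exponential smallness of the mass of the averaged spectral measure outside $[-2,2]$; resumming these against the resolvent kernel yields $\eta_W=O_{E_0}(1/W)$.

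\emph{Step 3 (stability).} Finally, $w\mapsto\tfrac14 w^2+zw+1$ vanishes at $m_{\mathrm{sc}}$ with derivative $\tfrac12 m_{\mathrm{sc}}+z=\pm\sqrt{z^2-1}$, of modulus bounded below by a constant $c(E_0)>0$ for $E_0\in(-1+\delta,1-\delta)$ (it degenerates only as $z\to\pm1$). Since $m_W$ lies in the upper half-plane and $\tfrac14 m_W^2+zm_W+1=\eta_W$ with $|\eta_W|\le C(E_0)/W$, perturbing the roots gives $|m_W(z)-m_{\mathrm{sc}}(z)|\le C(E_0)/W$. Taking imaginary parts yields (\ref{eq:1term}), and since $|\Im m_{\mathrm{sc}}(E_0+i\epsilon)|\le|m_{\mathrm{sc}}(E_0+i\epsilon)|\le 2$ the estimate (\ref{eq:est}) follows, with $C(E_0)$ bounded on any $(-1+\delta,1-\delta)$.
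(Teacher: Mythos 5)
Your Step~1 and Step~3 are standard and fine (the quadratic $\tfrac14 w^2+zw+1$ and its stability away from $\pm1$ are correct for the density $a_0$), but Step~2 is not a proof: it is a list of the estimates one would need, and those estimates are precisely the open core of the problem. The route you propose --- decouple a row, write an approximate self-consistent equation, and control the error by iterating the decoupling, i.e.\ by the self-energy expansion truncated at order $\sim 1/\epsilon$ --- is exactly the ``na\"{\i}ve'' perturbation series that the paper discusses and discards: because a part of the spectrum sits at distance $\approx W^{-4/5}$ outside $[-1,1]$, the $n$-th coefficients of that expansion pick up contributions of size $\exp(c\,n\,W^{-2/5})$, and resumming them against the exponentially decaying resolvent kernel $e^{-n\epsilon}$ makes the series diverge for $\epsilon \ll W^{-2/5}$. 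Your asserted inputs are likewise unavailable at the claimed scales: smallness of the covariances $\langle G(v,v)G(0,0)\rangle - m_W^2$ and of $\sum_v\langle G(v,0)^2\rangle$ ``in the diffusive range $\epsilon\gg W^{-2}$'' is essentially quantum diffusion, which is known only for $\epsilon\gtrsim W^{-1/3}$ (Erd\H{o}s--Knowles), and the moment asymptotics you invoke, $\langle (H^{2n})(0,0)\rangle=\int E^{2n}a_0+O(n^{O(1)}W^{-1}\int E^{2n}a_0)$ for $n\lesssim 1/\epsilon\le W^{0.99}$, is (a version of) the statement to be proved, not an input; the edge eigenvalues ruin exactly this kind of relative error bound for such large $n$ unless something suppresses them.

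What is missing is the paper's actual mechanism for working below $W^{-2/5}$: abandon the resolvent (exponential) regularisation altogether, expand in Chebyshev polynomials $T_n(H)$ with a positive-definite test function $\varphi_q$ decaying faster than any exponential, truncate at $n_0=\lfloor W^{\eta}/\epsilon\rfloor$, and bound the surviving coefficients by the non-backtracking path/diagram classification, with each diagram estimated in Fourier space via divided differences (Proposition~\ref{p:dest}, giving $(C/|1-g|)^{E+1}(\log W/W)^{E-V+1}$). Only \emph{a posteriori} is the Stieltjes kernel of the theorem recovered, by approximating it with positive linear combinations of the kernels $h_{E_0,\epsilon}$ --- positivity of $\hat\varphi_q$ and of the kernel being essential for this comparison step. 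Without this (or some genuinely new control of the resolvent at $\epsilon\approx W^{-0.99}$), your $|\eta_W|\le C(E_0)/W$ in Step~2 is unsupported, and the argument does not close.
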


When $\epsilon > 0$ is fixed and $W \to \infty$, the asymptotics (\ref{eq:1term}) follows
from the results of Bogachev, Molchanov, and Pastur \cite{BMP}. For $\epsilon \gtrsim W^{-1/3}$,
the asymptotics (\ref{eq:1term}) follows from the recent result of Erd\H{o}s and Knowles \cite{EK}
(see also \cite{EK2} for an extension to more general random band matrices.) These results are based
on rigorous control of the perturbation series in $W$. As observed by Erd\H{o}s and Knowles, the
perturbative series they use diverges for $\epsilon \lesssim W^{-2/5}$.

On the other hand, the methods of \cite{BMP} and \cite{EK,EK2} allow to handle the
(more difficult and physically more interesting) quantity
\begin{equation}\label{eq:abs}
\Big\langle |(H - E_0 - i \epsilon)^{-1}(0, 0)|^2  \, \Big\rangle~,
\end{equation}
and \cite{EK,EK2} actually control the quantum dynamics $\exp(itH)$ for $t \lesssim W^{1/3}$.

In \cite{ESchY}, Erd\H{o}s, Yau, and Yin used different methods to prove an analogue of (\ref{eq:1term}) for finite band
matrices of size $N \times N$, for $\epsilon \gtrsim W^{-1} \,\, \log^C N$, and with error term
$\lesssim (W \epsilon)^{-1/2} \log^C N$.  They also control the fluctuations of
\[ (H - E_0 - i \epsilon)^{-1}(0, 0)  \]
at these scales.

In this note we go back to the perturbation series (\ref{eq:formal}), and introduce a regularisation
procedure which allows to justify it at scales $\epsilon \geq W^{-0.99}$. For now, we only deal with the average
spectral measure; however, we believe that the method could also be applicable to (\ref{eq:abs}).
One advantage of the perturbative method is that it allows to get the (optimal) error term $O(1/W)$; actually,
our method can be used to justify the first $W^{0.99}$ terms in the asymptotic expansion, and thus
obtain an approximation with error term which is exponentially small in the width $W$.

We refer to the survey of Spencer \cite{Sp} for other results and problems pertaining to random band matrices.

\vspace{2mm}\noindent
The paper is organised as follows. In Section~\ref{s:meth} we describe the formal perturbation
series. In Section~\ref{s:tech} we collect several technical statements which are used in the proof.
Section~\ref{s:diags} briefly summarises the topological classification of paths, based on \cite{FS}.
In Section~\ref{s:emb} we estimate the contribution of every equivalence class to the perturbation
series. The proof of the theorem appears in Section~\ref{s:pf}, and is followed by the concluding
remarks of Section~\ref{s:rem}.

\vspace{2mm}\noindent
{\bf Acknowledgment.} Tom Spencer encouraged me to study the average spectral measure on
short scales, suggested several crucial steps in the argument, and commented on a preliminary
version of this paper. I thank him very much.

\section{Formal perturbation series, and combinatorial preliminaries}\label{s:meth}

Denote by $\mathbb{Z}(W)$ the graph having $\mathbb{Z}$ as its set of vertices,
and $u \sim v$ if $0 < |u - v| \leq W$.

Let $T_n$ and $U_n$ denote the Chebyshev polynomials of the first and second kind,
respectively:
\begin{equation}
T_n(\cos \theta) = \cos (n \theta)~, \quad U_n(\cos \theta) = \frac{\sin((n+1)\theta)}{\sin \theta}~;
\end{equation}
we formally set $U_{-2} \equiv U_{-1} \equiv 0$. Let
\begin{equation}\begin{split}\label{eq:unw}
U_{n,W} &= U_n - \frac{1}{2W-1} U_{n-2}~.
\end{split}\end{equation}
Then (see \cite[Lemma~2.7]{jmp})
\[ U_{n,W}(H)(u_0, u_n) = (2W-1)^{-n/2} \sum H(u_0, u_1) H(u_1, u_2) \cdots H(u_{n-1}, u_n)~, \]
where the sum is over $(n+1)$-tuples $(u_0, u_1, u_2, \cdots, u_n)$ such that
$u_j \neq u_{j+2}$, $j = 0, 1, 2, \cdots, n-2$ ({\em non-backtracking} paths.) In particular,
\begin{equation}\label{eq:unw_paths}
\langle U_{n,W}(H)(u_0, u_n) \rangle = (2W-1)^{-n/2} \text{Paths}_n(u_0, u_n)~,
\end{equation}
where $\text{Paths}_n(u_0, u_n)$ is the number of paths from $u_0$ to $u_n$ in $\mathbb{Z}(W)$
which pass every edge an even number of times and satisfy the condition above (in particular,
$\text{Paths}_n(u, v) = \delta_{uv}$.) Combining (\ref{eq:unw}) with (\ref{eq:unw_paths}), we
obtain:
\begin{equation}
\langle U_{n}(H)(u_0, u_n) \rangle = (2W-1)^{-n/2} \sum_{m = 0}^{n/2} \text{Paths}_{n-2m}(u_0, u_n)~,
\end{equation}
and hence
\begin{equation}
\langle U_{n}(H)(u_0, u_0) \rangle = (2W-1)^{-n/2} \sum_{m = 0}^{n/2} \text{Paths}_{n-2m}(u_0, u_0)~.
\end{equation}
Observe that $\text{Paths}_{m}(u_0, u_0)$ does not depend on $u_0 \in \mathbb{Z}$, therefore
we denote it simply $\text{Paths}_{m}$. Also, $\text{Paths}_{m} = 0$ if $m$ is odd.

Next, $T_n = (U_n - U_{n-2})/2$, therefore
\begin{equation}\label{eq:reprtn}\begin{split}
&\langle T_{n}(H)(0, 0) \rangle \\
&\quad= \frac{1}{2(2W-1)^{n/2}} \sum_{m = 0}^{n/2}
    \Big\{ \text{Paths}_{n-2m} - (2W-1)\,\text{Paths}_{n-2m-2} \Big\}~,
\end{split}\end{equation}
where we formally set $\text{Paths}_{m}=0$ for $m < 0$.

It will be convenient to rewrite $\Big\{ \text{Paths}_{n-2m} - (2W-1)\,\text{Paths}_{n-2m-2} \Big\}$ in
a different form. Let $\text{Paths}^0_n$ be the number of paths which pass every edge an even number of times
and satisfy the strengthened non-backtracking condition $u_0 \neq u_2$, $u_1 \neq u_3$, \dots, $u_{n-1} \neq u_1$. Then
\[ \text{Paths}_n = \text{Paths}^0_n + (2W-2) \sum_{1 \leq j < n/2} (2W-1)^{j-1} \text{Paths}^0_{n-2j}~. \]
Therefore
\begin{equation}\label{eq:paths_paths0}
\text{Paths}_n - (2W-1) \text{Paths}_{n-2}
    = \text{Paths}^0_n - \text{Paths}^0_{n-2}~.
\end{equation}
To make this identity valid for all $n \geq 1$, we formally set $\text{Paths}^0_0 = 2W-1$,
$\text{Paths}^0_{-1} = 0$.

Now, $\delta(E - E_0)$ admits a formal expansion
\begin{equation}\label{eq:reprkernel}
\delta(E - E_0) \sim \frac{1}{\pi \sqrt{1-E_0^2}} \left\{ 1 + 2 \sum_{n = 1}^\infty T_n(E_0) T_n(E) \right\}
\end{equation}
We shall discuss a regularised version of this series in the next section; for now, we remark that
(\ref{eq:reprkernel}) is a rigorous identity in $L_2(-1, 1)$, since $T_n$ are the orthogonal
polynomials with respect to the measure $\frac{dE}{\pi \sqrt{1-E_0^2}}$.

Convolving (\ref{eq:reprkernel}) with $\rho$, we obtain:
\begin{equation}\begin{split}
\rho(E_0)
    &\sim \big\langle \delta(H - E_0)(0, 0) \big\rangle\\
    &\hspace{-15pt}= \frac{1}{\pi \sqrt{1-E_0^2}}  \left\{ 1 + 2 \sum_{n = 1}^\infty T_n(E_0) \langle T_n(H)(0, 0) \rangle \right\} \\
    &\hspace{-15pt}= \frac{1}{\pi \sqrt{1-E_0^2}}  \left\{ 1 + \sum_{n = 1}^\infty
        \frac{T_{2n}(E_0)}{(2W-1)^{n}} \sum_{m = 0}^{n} \left[ \text{Paths}^0_{2m} - \text{Paths}^0_{2m-2} \right] \right\}~.
\end{split}\end{equation}
Picking the addends which are not divided by powers of $2W-1$, we see that the leading term is
\begin{equation}
\frac{1}{\pi \sqrt{1-E_0^2}} \left\{ 1 + \frac{T_2(E_0)}{2W-1} \times (-2W+1) \right\}
    = \frac{2}{\pi} \sqrt{1 - E_0^2}~.
\end{equation}

In this way one can also obtain the full expansion of the form (\ref{eq:formal}) which, unfortunately, diverges.

One may introduce a regularisation factor and consider the expression
\begin{multline}\label{eq:stieltjes_reg}
\frac{1}{\pi} \langle \Im (H - E_0 - i\epsilon)^{-1}(0, 0) \rangle\\
\sim \frac{1}{\pi \sqrt{1-E_0^2}}
    \left\{ 1 + 2 \sum_{n = 1}^\infty e^{-n \epsilon} T_n(E_0) \langle T_n(H)(0, 0) \rangle \right\}~,
\end{multline}
which corresponds to the density of states averaged over an interval of width $\approx \epsilon$ about $E_0$
(more precisely, the convolution of the density of states with an approximate $\delta$-function of
width $\epsilon$.) However, it also diverges for sufficiently small $\epsilon$. The reason is the large contribution to
$\langle T_n(H) \rangle$ of the part of the spectrum of $H$ outside $[-1, 1]$. Namely, it is known \cite{edge}
that a part of the spectrum lies at distance $\approx W^{-4/5}$ from $[-1, 1]$; the polynomials $T_n$ grow as
\[ |T_n(\pm (1+\delta))| \approx \exp(\sqrt{2\delta} \, n)~, \]
therefore the series above can not converge for $\epsilon \ll W^{-2/5}$.

Here we propose a different regularisation:
\[ \frac{1}{\pi \sqrt{1-E_0^2}}
    \left\{ 1 + 2 \sum_{n = 1}^\infty \varphi(n\epsilon) T_n(E_0) \langle T_n(H)(0, 0) \rangle \right\}~, \]
and justify its convergence for $\epsilon \geq W^{-0.99}$. For reasonable $\phi$, we shall see
(in Lemma~\ref{l:pois}) that it also represents an average of
the density of states over an interval of width $\approx \epsilon$ about $E_0$. Therefore {\em a posteriori}
we obtain an expansion for
\[ \frac{1}{\pi} \langle \Im (H - E_0 - i\epsilon)^{-1}(0, 0)\rangle~.\]

The main step is to pick $\varphi$ carefully. In particular, the argument above shows that $\varphi$ has
to decay faster than exponentially at infinity. As we shall see, it will also be convenient to have $\varphi$
positive-definite and analytic.

\section{Some auxiliary statements}\label{s:tech}

Let $\varphi: \mathbb{R} \to \mathbb{R}_+$ be a smooth even function
such that $\varphi(0) = 1$, $\varphi$ decreases to zero on
$\mathbb{R}_+$, and $\hat{\varphi} \geq 0$. We shall take
\begin{equation}\label{eq:phiq}
\varphi_q(t) = \frac{1}{A_q} \int_{-\infty}^\infty \exp \left[ - s^{2q} - (t-s)^{2q} \right] \, ds~,
\end{equation}
where $q$ is a (large) integer parameter which we shall choose later, but the next lemma
will be applicable in the general setting.

For $\epsilon > 0$ and $E_0 = \cos \theta_0 \in (-1, 1)$, set
\[ f_{E_0, \epsilon}(E) = 1 + 2 \sum_{n\geq1} \varphi(n\epsilon) T_n(E_0) \, T_n(E)~. \]
Dirichlet's criterion shows that the series converges.

\begin{lemma}\label{l:pois}
\begin{multline}\label{eq:l.pois}
f_{\cos \theta_0, \epsilon}(\cos \theta) \\
    = \frac{1}{2\epsilon} \sum_{m=-\infty}^\infty
    \left\{ \hat{\varphi} \left( \frac{1}{\epsilon} \left[ m - \frac{\theta + \theta_0}{2\pi} \right] \right)
    + \hat{\varphi} \left( \frac{1}{\epsilon} \left[ m - \frac{\theta - \theta_0}{2\pi} \right] \right) \right\}~.
\end{multline}
\end{lemma}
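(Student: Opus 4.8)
The plan is to recognize the left-hand side as a (regularized) Poisson-type summation identity for the Chebyshev kernel. First I would substitute $E_0 = \cos\theta_0$, $E = \cos\theta$ and use the defining relation $T_n(\cos\alpha) = \cos(n\alpha)$ to rewrite
\[
 f_{\cos\theta_0,\epsilon}(\cos\theta) = 1 + 2\sum_{n\ge 1} \varphi(n\epsilon)\cos(n\theta_0)\cos(n\theta).
\]
Then I would apply the product-to-sum formula $2\cos(n\theta_0)\cos(n\theta) = \cos(n(\theta+\theta_0)) + \cos(n(\theta-\theta_0))$, so that the expression splits into two sums of the same shape, one in the variable $\psi_+ = \theta+\theta_0$ and one in $\psi_- = \theta-\theta_0$. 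Since $\varphi$ is even, each becomes
\[
 1 + \sum_{n\ge 1}\varphi(n\epsilon)\cos(n\psi) = \tfrac12\Big(1 + \sum_{n\in\mathbb{Z}}\varphi(n\epsilon)\, e^{in\psi}\Big)
\]
(using $\varphi(0)=1$ to absorb the constant term); here convergence of the symmetrized series is what Dirichlet's criterion, quoted in the excerpt, guarantees.

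The core step is then Poisson summation applied to the function $n \mapsto \varphi(n\epsilon)e^{in\psi}$. Writing $g(x) = \varphi(\epsilon x)e^{i\psi x}$, its Fourier transform (with the convention making $\hat\varphi \ge 0$ meaningful for the $\varphi_q$ at hand) is $\hat{g}(\xi) = \tfrac1\epsilon \hat\varphi\big(\tfrac{1}{\epsilon}(\xi - \tfrac{\psi}{2\pi})\big)$ up to the normalization of $\hat{\cdot}$; Poisson summation gives $\sum_{n\in\mathbb{Z}} g(n) = \sum_{m\in\mathbb{Z}}\hat g(m)$, i.e.
\[
 \sum_{n\in\mathbb{Z}}\varphi(n\epsilon)e^{in\psi} = \frac1\epsilon\sum_{m\in\mathbb{Z}} \hat\varphi\!\left(\frac1\epsilon\Big[m - \frac{\psi}{2\pi}\Big]\right).
\]
Substituting $\psi = \theta\pm\theta_0$ and adding the two halves (each carrying a factor $\tfrac12$) produces exactly the claimed right-hand side, with the prefactor $\tfrac{1}{2\epsilon}$. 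I would state explicitly the Fourier-transform normalization being used and check that it is the one under which the standing hypothesis $\hat\varphi\ge 0$ is the natural positive-definiteness condition, so the formula is internally consistent with Section~\ref{s:tech}.

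The main obstacle is justifying Poisson summation rigorously here, since $\varphi$ is merely smooth and rapidly decreasing but the series $\sum \varphi(n\epsilon)e^{in\psi}$ is only conditionally convergent (as Dirichlet, not absolute convergence, is invoked). For the explicit choice $\varphi_q$ in \eqref{eq:phiq} this is not an issue: $\varphi_q$ is a Schwartz function (in fact analytic) with $\hat\varphi_q$ Schwartz as well, so both sides of Poisson summation converge absolutely and the identity is classical. For the general statement I would either (i) impose as part of the hypotheses that $\varphi$ and $\hat\varphi$ both decay fast enough for absolute convergence, or (ii) interpret the identity in the sense of distributions / as $L^2(-1,1)$ or pointwise-a.e. limits, using that $\varphi$ smooth implies $\hat\varphi$ decays faster than any polynomial so the right-hand side converges locally uniformly, and a standard summability (Abel or Cesàro) argument reconciles it with the conditionally convergent left-hand side. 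I would handle the cleanest case—$\varphi = \varphi_q$, where everything is Schwartz—in detail, and remark that the general case follows by the same computation under the appropriate decay assumption.
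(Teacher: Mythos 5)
Your proof is correct and takes essentially the same route as the paper's: after writing $T_n(\cos\alpha)=\cos(n\alpha)$ and symmetrizing, the paper also expresses $f_{\cos\theta_0,\epsilon}(\cos\theta)$ as $\tfrac12\sum_{n\in\mathbb{Z}}\varphi(n\epsilon)\{e^{in(\theta+\theta_0)}+e^{in(\theta-\theta_0)}\}$, computes the Fourier transform of $t\mapsto\varphi(t\epsilon)e^{it\psi}$, and invokes Poisson summation. Your added discussion of when Poisson summation is rigorously justified (absolute convergence for the Schwartz-class $\varphi_q$, summability arguments otherwise) is a reasonable elaboration of the paper's one-line appeal to the formula, not a different method.
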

Note that, if $\hat\varphi \geq 0$, the right-hand side is an approximate $\delta$-function in $\theta$
of width $\approx \epsilon$.

\begin{proof}
First,
\[ f_{\cos \theta_0, \epsilon}(E) = \frac{1}{2} \sum_{n = -\infty}^\infty g(n)~,\]
where
\[ g(t) = \varphi(t\epsilon)
    \Big\{ \exp(it(\theta + \theta_0)) + \exp(it(\theta-\theta_0)) \Big\}~.\]
Now,
\begin{equation}
\begin{split}
\hat{g}(\xi)
    &= \int_{-\infty}^\infty g(t) \exp(- 2\pi i t \xi) dt \\
    &= \frac{1}{\epsilon} \left\{ \hat{\varphi} \left( \frac{1}{\epsilon} \left[ \xi - \frac{\theta + \theta_0}{2\pi} \right] \right)
    + \hat{\varphi} \left( \frac{1}{\epsilon} \left[ \xi - \frac{\theta - \theta_0}{2\pi} \right] \right) \right\}~,
\end{split}
\end{equation}
therefore the lemma follows from Poisson's summation formula.
\end{proof}

Now we need some properties that are specific for $\varphi_q$ from (\ref{eq:phiq}).
Denote
\begin{equation}\label{eq:Fq}
F_q(\xi) = \int_{-\infty}^\infty \exp(-2\pi i x \xi - x^{2q}) dx~.
\end{equation}
This is obviously an entire function.
The following lemma can be proved using a saddle-point argument:
\begin{lemma}\label{l:fq}
For any $\delta > 0$ there exist $C_\delta, c_\delta > 0$ such that
\[ | F_q(\rho e^{i \phi}) | \leq C_\delta \exp \left\{ - c_\delta \rho^\frac{2q}{2q-1} \right\} \]
for $ - \frac{\pi}{4q} + \delta < \phi < \frac{\pi}{4q} - \delta$ and for
$\pi - \frac{\pi}{4q} + \delta < \phi < \pi + \frac{\pi}{4q} - \delta$.
\end{lemma}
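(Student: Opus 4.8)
The plan is to prove this by the saddle‑point method applied directly to the defining integral (\ref{eq:Fq}), after two routine reductions. First, substituting $x\mapsto -x$ shows $F_q(-\xi)=F_q(\xi)$, so the estimate in the second sector follows from the one in the first, and we treat only $-\tfrac{\pi}{4q}+\delta<\phi<\tfrac{\pi}{4q}-\delta$. Write $\xi=\rho e^{i\phi}$; since $F_q$ is entire it is bounded on $\{|\xi|\le 1\}$, so we may take $\rho\ge 1$. Substituting $x=\rho^{1/(2q-1)}u$ and setting $R=\rho^{2q/(2q-1)}$ turns $F_q(\rho e^{i\phi})$ into $\rho^{1/(2q-1)}\int_{-\infty}^{\infty}e^{-R\psi_\phi(u)}\,du$, where $\psi_\phi(u)=2\pi i e^{i\phi}u+u^{2q}$. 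The polynomial prefactor is harmless, so it suffices to produce $a_\delta,C_\delta>0$ (allowed to depend on $q$) with $\big|\int_{\mathbb{R}}e^{-R\psi_\phi}\,du\big|\le C_\delta e^{-a_\delta R}$ for $R\ge 1$ and $\phi$ in the sub‑sector.

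Next I would locate the saddles. Solving $\psi_\phi'(u)=2\pi i e^{i\phi}+2qu^{2q-1}=0$ gives the $2q-1$ points on the circle $|u|=(\pi/q)^{1/(2q-1)}$ with $u_*^{2q-1}=-\pi i e^{i\phi}/q$, at each of which $\psi_\phi(u_*)=-(2q-1)u_*^{2q}$, hence $\operatorname{Re}\psi_\phi(u_*)=-(2q-1)(\pi/q)^{2q/(2q-1)}\cos(\arg u_*^{2q})$ with $\arg u_*^{2q}=\tfrac{1}{2q-1}(2q\phi-\pi q+2\pi m)$, $m\in\mathbb{Z}$. A count of angles shows exactly $q$ of these points lie in the open lower half‑plane; for the one nearest the positive real axis, $u_0=(\pi/q)^{1/(2q-1)}\exp\!\big(\tfrac{i}{2q-1}(\phi-\tfrac\pi2)\big)$, one has $\arg u_0^{2q}=\tfrac{2q}{2q-1}(\phi-\tfrac\pi2)$, which runs exactly through $(-\pi,-\tfrac\pi2)$ as $|\phi|<\tfrac{\pi}{4q}$, and a short check shows the arguments of the remaining $q-1$ lower‑half‑plane saddles then lie even deeper inside $(\tfrac\pi2,\tfrac{3\pi}{2})\pmod{2\pi}$; so all $q$ cosines are negative there, and on the closed sub‑sector $|\phi|\le\tfrac{\pi}{4q}-\delta$ they are bounded away from $0$. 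By compactness, $a_\delta:=\inf_{\phi}\min_{0\le j\le q-1}\operatorname{Re}\psi_\phi(u_{-j})>0$.

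The main work — and the step I expect to be the real obstacle — is the contour deformation. Since $\psi_\phi(u)=u^{2q}+O(|u|)$, we have $\operatorname{Re}\psi_\phi\to+\infty$ at infinity inside each ``petal'' $\{|\arg u-k\pi/q|<\tfrac{\pi}{4q}\}$, and the petals $k=0$ and $k=-q$ (the latter being the one containing $\mathbb{R}^-$) contain the two ends of $\mathbb{R}$. Fix a regular value $\lambda\in(0,a_\delta)$. In the elementary Morse picture of the harmonic function $\operatorname{Re}\psi_\phi$, whose critical points are exactly the $u_*$ above, lowering the level past $\operatorname{Re}\psi_\phi(u_{-j})$ merges the petal components $k=-j$ and $k=-(j{+}1)$; since $\lambda$ lies below all of $\operatorname{Re}\psi_\phi(u_0),\dots,\operatorname{Re}\psi_\phi(u_{-(q-1)})$, the petals $k=0,-1,\dots,-q$ all lie in one component of $\{\operatorname{Re}\psi_\phi>\lambda\}$. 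Choose inside that component a path $\Gamma_\phi$ running to infinity in petal $k=0$ on one side and petal $k=-q$ on the other — i.e.\ off to $+\infty$ and $-\infty$ essentially along $\mathbb{R}$, dipping through the lower half‑plane in between. Since the integrand is entire and decays at infinity in these petals, $\int_{\mathbb{R}}e^{-R\psi_\phi}=\int_{\Gamma_\phi}e^{-R\psi_\phi}$ by Cauchy's theorem. On $\Gamma_\phi$ one has $\operatorname{Re}\psi_\phi\ge\lambda$ with $\operatorname{Re}\psi_\phi-\lambda\gtrsim|u|^{2q}$ near the two ends, so for $R\ge 1$,
\[
    \Big| \int_{\Gamma_\phi} e^{-R\psi_\phi(u)}\,du \Big|
    \le e^{-\lambda R}\int_{\Gamma_\phi} e^{-(\operatorname{Re}\psi_\phi(u)-\lambda)}\,|du|
    =: C_\delta\, e^{-\lambda R},
\]
with $C_\delta<\infty$ bounded uniformly in $\phi$ over the sub‑sector by continuity. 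Undoing the rescaling, and absorbing the factor $\rho^{1/(2q-1)}=R^{1/(2q)}$ into the exponential, yields the lemma with, e.g., $c_\delta=\lambda/2$ (and with the bound for $\rho\le 1$ handled by boundedness of $F_q$ on compacts). The delicate points to get right will be the angle bookkeeping placing all $q$ lower‑half‑plane saddles in $(\tfrac\pi2,\tfrac{3\pi}{2})$ on the nose at $|\phi|=\tfrac{\pi}{4q}$, and making the Morse‑theoretic connectivity claim — that a single admissible contour threads through all the relevant petals — fully rigorous.
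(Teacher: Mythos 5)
The paper offers no proof of this lemma beyond the one-line remark that it follows from a saddle-point argument, and your proposal is exactly that argument: the symmetry $F_q(-\xi)=F_q(\xi)$, the rescaling producing the phase $\psi_\phi(u)=2\pi i e^{i\phi}u+u^{2q}$ with large parameter $R=\rho^{2q/(2q-1)}$, the saddles $u_*^{2q-1}=-\pi i e^{i\phi}/q$ with $\psi_\phi(u_*)=-(2q-1)u_*^{2q}$, and a deformation of $\mathbb{R}$ through the lower half-plane. The skeleton is sound and the key positivity statement (all $q$ lower-half-plane saddle levels bounded below by a positive constant on the closed sub-sector) is true. However, your verification of that statement is wrong as written: $\arg u_0^{2q}=\tfrac{2q}{2q-1}(\phi-\tfrac{\pi}{2})$ ranges over $\bigl(-\tfrac{(2q+1)\pi}{2(2q-1)},-\tfrac{\pi}{2}\bigr)$ as $|\phi|<\tfrac{\pi}{4q}$, not ``exactly through $(-\pi,-\tfrac{\pi}{2})$'' (that would need $q=3/2$); and it is not true that the remaining $q-1$ lower saddles lie ``even deeper inside'': as $\phi\downarrow-\tfrac{\pi}{4q}$ the binding saddle is the one nearest the negative real axis, whose $\arg u_*^{2q}$ tends to $\tfrac{\pi}{2}$, while $u_0$ stays safely interior. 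The conclusion survives, and a correct short check is available: the lower saddles are $m=0$ and $m=q,\dots,2q-2$, with $\arg u_m^{2q}\equiv\tfrac{2q\phi+\pi(2m-q)}{2q-1}\pmod{2\pi}$; at $\phi=0$ each of these points lies in $(\tfrac{\pi}{2},\tfrac{3\pi}{2})$ mod $2\pi$ at distance at least $\tfrac{\pi}{2(2q-1)}$ from the endpoints (with equality exactly for $m=0$ and $m=q$), while the shift caused by $|\phi|\le\tfrac{\pi}{4q}-\delta$ has magnitude at most $\tfrac{\pi}{2(2q-1)}-\tfrac{2q\delta}{2q-1}$, leaving a margin $\tfrac{2q\delta}{2q-1}$ and hence $\cos(\arg u_*^{2q})\le -c_\delta<0$ uniformly.

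The second weak point is the one you flag yourself: the Morse-theoretic connectivity claim (``passing the level of $u_{-j}$ merges petals $-j$ and $-(j+1)$'') is asserted, not proved, and it is the step that actually produces an admissible contour from $+\infty$ to $-\infty$ with $\operatorname{Re}\psi_\phi\ge\lambda>0$. To close it you either need to trace the ascending paths from each lower saddle into the two angularly adjacent petals (using the interlacing of the $q$ lower saddles with the $q+1$ petal directions $0,-\tfrac{\pi}{q},\dots,-\pi$, which does hold and is worth recording explicitly), or — more simply — bypass the topology altogether by writing down a concrete contour (e.g.\ a polygonal or piecewise-ray path through neighbourhoods of the $q$ lower saddles, with ends along the real directions) and estimating $\operatorname{Re}\psi_\phi$ on it directly, which also makes the uniformity of $C_\delta$ in $\phi$ transparent. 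With the angle computation corrected and the contour made explicit, your argument is precisely the saddle-point proof the paper has in mind.
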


Now take $\varphi = \varphi_q$ in Lemma~\ref{l:pois}; then $\widehat{\varphi_q} = F_q^2$. From Lemma~\ref{l:fq},
both sides of (\ref{eq:l.pois}) are analytic  functions of $\theta$, therefore
the equality can be extended to complex $\theta$. In particular, $f_{E_0, \epsilon}$ is bounded
(uniformly in $\epsilon > 0$) on an open interval containing $[-1, 1]$.

\vspace{2mm}

Finally, we state  -- for use in Section~\ref{s:emb} -- the definition and a couple of properties of
divided differences; we refer to the survey of de Boor \cite{dB} for the proofs.

For (distinct) $z_1, \cdots, z_E$ in the domain of definition of a function $f$, consider the $(E-1)$-th
divided difference $f[z_1, \cdots, z_{E}]$, defined as follows:
\[ f[z_1] = f(z_1)~, \quad f[z_1, \cdots, z_E] = \frac{f[z_1, \cdots, z_{E-1}] - f[z_2, \cdots, z_E]}{z_1 - z_E}~. \]
An equivalent definition is given by
\[ f[z_1,\cdots, z_E] = \sum_e \frac{f(z_e)}{\prod_{f \neq e}(z_e - z_f)}~.\]

\begin{lemma}\label{l:meanval}
Let $f$ be an $(E-1)$-times differentiable function in a convex domain $D \subset \mathbb{C}$. Then
for any distinct $z_1, \cdots, z_E \in D$ there exists $z^* \in \mathrm{conv}(z_1, \cdots, z_E)$
such that
\[f[z_1,\cdots,z_{E}] = \frac{f^{(E-1)}(z^*)}{(E-1)!}~.\]
\end{lemma}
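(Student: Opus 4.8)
The plan is to pass through the Hermite--Genocchi integral formula and then extract a point from it by the intermediate value theorem. First I would establish, by induction on $E$, that for distinct $z_1,\dots,z_E\in D$
\begin{equation}\label{eq:hergen}
 f[z_1,\dots,z_E]=\int_{\Delta_{E-1}} f^{(E-1)}\!\Big(\textstyle\sum_{j=1}^{E} s_j z_j\Big)\,ds_2\cdots ds_E~,
\end{equation}
where $\Delta_{E-1}=\{\,s_2,\dots,s_E\ge 0,\ s_2+\dots+s_E\le 1\,\}$ and $s_1:=1-s_2-\dots-s_E$; the integrand is well defined because $\sum_j s_j z_j\in\mathrm{conv}(z_1,\dots,z_E)\subset D$ by convexity of $D$. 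The base case $E=2$ is the fundamental theorem of calculus applied along the segment $[z_2,z_1]$, giving $f[z_1,z_2]=\int_0^1 f'(z_2+s(z_1-z_2))\,ds$. For the inductive step I would carry out the innermost integral (over $s_E$, noting that $z_E-z_1$ is the derivative of $\sum_j s_j z_j$ in that variable): this turns the right-hand side of \eqref{eq:hergen} into the difference of two $(E{-}1)$-node Hermite--Genocchi integrals divided by $z_1-z_E$, and a short check of the barycentric parametrizations of the faces identifies these with $f[z_1,\dots,z_{E-1}]$ and $f[z_2,\dots,z_E]$, so that the right-hand side of \eqref{eq:hergen} equals $(f[z_1,\dots,z_{E-1}]-f[z_2,\dots,z_E])/(z_1-z_E)=f[z_1,\dots,z_E]$.

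Granting \eqref{eq:hergen}, and using $\mathrm{vol}(\Delta_{E-1})=1/(E-1)!$, the quantity $(E-1)!\,f[z_1,\dots,z_E]$ is the average of $f^{(E-1)}$ with respect to the uniform probability measure on $\Delta_{E-1}$; since $s\mapsto\sum_j s_j z_j$ maps $\Delta_{E-1}$ onto $\mathrm{conv}(z_1,\dots,z_E)$, this exhibits $(E-1)!\,f[z_1,\dots,z_E]$ as a weighted mean of the values taken by $f^{(E-1)}$ on $\mathrm{conv}(z_1,\dots,z_E)$. It then remains to realize this weighted mean as an actual value $f^{(E-1)}(z^\ast)$. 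Here one uses that $\mathrm{conv}(z_1,\dots,z_E)$ is connected, hence so is the image $K:=f^{(E-1)}(\mathrm{conv}(z_1,\dots,z_E))$, so that a mean of values in $K$ lies between $\inf K$ and $\sup K$ and is therefore attained: the intermediate value theorem produces $z^\ast\in\mathrm{conv}(z_1,\dots,z_E)$ with $f^{(E-1)}(z^\ast)=(E-1)!\,f[z_1,\dots,z_E]$.

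The only delicate point is this last step, the passage from the convex-combination representation \eqref{eq:hergen} to an attained value: it is precisely where connectedness is needed (together with the fact that the restriction of $f^{(E-1)}$ to the relevant segment of the convex hull is real-valued in the situations where the lemma is invoked); without this one would only conclude that $(E-1)!\,f[z_1,\dots,z_E]$ lies in the closed convex hull of $K$. Everything else — the induction proving \eqref{eq:hergen}, the volume of the simplex, and the change of variables in the inner integral — is routine, and is carried out in full in de Boor's survey \cite{dB}.
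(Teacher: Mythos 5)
The paper does not actually prove Lemma~\ref{l:meanval}; it quotes it from de Boor's survey \cite{dB}, and the Hermite--Genocchi representation you derive is exactly the standard route taken there. The bulk of your argument --- the induction giving the simplex integral, the identification of the inner integral with a difference of lower-order divided differences, the volume $1/(E-1)!$, and the observation that the integrand is evaluated inside $\mathrm{conv}(z_1,\dots,z_E)\subset D$ --- is correct and matches the intended source.

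The genuine gap is the final step. For a complex-valued function on a convex domain in $\mathbb{C}$ the intermediate value theorem is not available: the image of $f^{(E-1)}$ on the convex hull is a connected subset of $\mathbb{C}$, not an interval, and an average over the simplex need not be an attained value. Indeed the exact mean-value statement is false as written: take $f(z)=z^3$, $E=2$, $z_1=e^{2\pi i/3}$, $z_2=e^{-2\pi i/3}$; then $f[z_1,z_2]=(z_1^3-z_2^3)/(z_1-z_2)=0$, while $f'(z)=3z^2$ vanishes only at $z=0$, which is not on the segment joining $z_1$ and $z_2$. Your fallback --- that $f^{(E-1)}$ is real-valued on the relevant segment ``in the situations where the lemma is invoked'' --- does not hold in this paper either: in (\ref{eq:mv4s}) the nodes are $gw(\xi_e)$ with $|g|=1$ complex, and $S_\epsilon$ is not real on the segment $g\cdot(-1,1)$ unless $g=\pm1$. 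What your integral formula does deliver, and what is actually used, is the convex-hull form of the conclusion: $f[z_1,\dots,z_E]$ lies in the closed convex hull of $\bigl\{f^{(E-1)}(z)/(E-1)!\,:\,z\in\mathrm{conv}(z_1,\dots,z_E)\bigr\}$, hence $\bigl|S_\epsilon[gw(\xi_1),\dots,gw(\xi_E)]\bigr|\le \sup_{|z|\le1}|S_\epsilon^{(E-1)}(z)|/(E-1)!$, which together with Lemma~\ref{l:s.eps} is all that the proof of Proposition~\ref{p:dest} requires. So either add a realness hypothesis (under which your compactness/IVT argument closes the proof), or prove and use the convex-hull version; as it stands, the IVT extraction of $z^\ast$ is the one step that fails, and it cannot be repaired in the stated generality.
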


\begin{lemma}\label{l:sumsimplex} For $z_1, \cdots, z_E \in \mathbb{C}$,
\[ \sum_{ (n_1, \cdots, n_E) \in \Delta_{E, n}} \prod_{e=1}^E z_e^{n_e} = m_{n-1}[z_1, \cdots, z_E] \, \prod_{e=1}^E z_e~, \]
where
\[ \Delta_{E,n} = \Big\{ n_1, \cdots, n_E \geq 1 \, \big| \, n_1 + \cdots + n_E = n \Big \}~,\]
and $m_{n-1}(z)=z^{n-1}$.
\end{lemma}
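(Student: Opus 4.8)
The plan is to reduce the claim to the classical identity that a divided difference of a power is a complete homogeneous symmetric polynomial, and then to prove that identity. First I would set $n_e = k_e + 1$: since each $n_e \geq 1$, this is a bijection of $\Delta_{E,n}$ onto the set of $(k_1, \dots, k_E)$ with $k_e \geq 0$ and $k_1 + \dots + k_E = n - E$, and $\prod_e z_e^{n_e} = \big(\prod_e z_e\big) \prod_e z_e^{k_e}$. Hence the left-hand side equals $\big(\prod_e z_e\big)\, h_{n-E}(z_1, \dots, z_E)$, where $h_d(z_1, \dots, z_E) = \sum \prod_e z_e^{k_e}$, summed over $k_e \geq 0$ with $k_1 + \dots + k_E = d$ (and $h_d \equiv 0$ for $d < 0$). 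So it suffices to prove the polynomial identity
\[ m_{n-1}[z_1, \dots, z_E] = h_{n-E}(z_1, \dots, z_E)~; \]
multiplying through by $\prod_e z_e$ then recovers the lemma.

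I would prove this by induction on $E$, simultaneously for all $n$ (and, as we may, for distinct $z_1, \dots, z_E$; the general case follows since both sides are polynomials). For $E = 1$ both sides equal $z_1^{n-1}$, and for $n < E$ both sides vanish (the left one because $m_{n-1}$ is then a polynomial of degree $< E - 1$, cf.\ Lemma~\ref{l:meanval}). For the inductive step, apply the defining recursion
\[ m_{n-1}[z_1, \dots, z_E] = \frac{m_{n-1}[z_1, \dots, z_{E-1}] - m_{n-1}[z_2, \dots, z_E]}{z_1 - z_E} \]
and substitute the inductive hypothesis in each numerator term. Writing $h_{n-E+1}(z_1, \dots, z_{E-1}) = \sum_{j \geq 0} z_1^{\,n-E+1-j}\, h_j(z_2, \dots, z_{E-1})$ and likewise with $z_E$ in place of $z_1$, subtracting term by term, and using $z_1^m - z_E^m = (z_1 - z_E) \sum_{a+b = m-1} z_1^a z_E^b$, the numerator becomes $(z_1 - z_E) \sum_{j \geq 0} h_j(z_2, \dots, z_{E-1}) \sum_{a+b = n-E-j} z_1^a z_E^b = (z_1 - z_E)\, h_{n-E}(z_1, \dots, z_E)$; dividing by $z_1 - z_E$ finishes the step.

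A quicker route, sidestepping the bookkeeping, uses the residue formula for divided differences: from $m_{n-1}[z_1, \dots, z_E] = \sum_e \frac{z_e^{n-1}}{\prod_{f \neq e}(z_e - z_f)}$, summing the geometric series in $n$ gives
\[ \sum_{n \geq 1} m_{n-1}[z_1, \dots, z_E]\, t^{n-1} = \sum_e \frac{1}{\prod_{f \neq e}(z_e - z_f)\,(1 - z_e t)}~, \]
and one checks --- by clearing denominators and comparing two polynomials of degree $\leq E - 1$ at the $E$ points $t = 1/z_e$ --- that this equals $t^{E-1} \prod_{e=1}^E (1 - z_e t)^{-1} = \sum_{d \geq 0} h_d(z_1, \dots, z_E)\, t^{E-1+d}$; reading off the coefficient of $t^{n-1}$ gives the same identity. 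I expect the only genuine obstacle, in either approach, to be the (routine) index juggling; the generating-function version additionally wants the $z_e$ distinct and nonzero, which is harmless since both sides of the asserted identity are polynomials in $(z_1, \dots, z_E)$.
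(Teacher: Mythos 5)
Your proof is correct. Note, though, that the paper does not actually prove this lemma: it is stated among the ``couple of properties of divided differences'' for which the author simply refers to de Boor's survey \cite{dB}, so there is no in-paper argument to compare against. Your reduction is exactly the right one --- shifting $n_e = k_e+1$ turns the left-hand side into $\bigl(\prod_e z_e\bigr) h_{n-E}(z_1,\dots,z_E)$, so the lemma is precisely the classical fact that the $(E-1)$-st divided difference of the monomial $m_{n-1}$ is the complete homogeneous symmetric polynomial $h_{n-E}$. Both of your verifications of that fact are sound: the induction on $E$ via the defining recursion (with the correct degree-count base cases, including the vanishing for $n<E$, matching the empty simplex $\Delta_{E,n}$), and the generating-function/partial-fraction route, where your device of comparing two polynomials of degree at most $E-1$ at the $E$ points $t=1/z_e$ legitimately establishes $\sum_e \prod_{f\neq e}(z_e-z_f)^{-1}(1-z_e t)^{-1} = t^{E-1}\prod_e(1-z_e t)^{-1}$, and the restriction to distinct nonzero $z_e$ is indeed removable by polynomiality (the divided difference itself is a polynomial in the $z_e$, e.g.\ by Lemma~\ref{l:meanval}). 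Either argument is a standard proof of the identity one would find in \cite{dB}; the inductive one is the more self-contained, the generating-function one is the quicker and also makes transparent why this identity is the natural one to feed into the geometric-series computation for $S_\epsilon$ in Lemma~\ref{l:frepr}.
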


\section{Classification of paths}\label{s:diags}

Consider the collection $\bf{Paths}^0_{2n}$  of paths $u_0=0, u_1, u_2, \cdots, u_{2n-1}, u_{2n}=0$ in
$\mathbb{Z}(W)$ such that every edge appears an even number of times,  $u_j \neq u_{j+2}$ for
$j = 0,1, \cdots, n-2$, and $u_{n-1} \neq u_1$.

A {\em pairing} of a path is a paring of ${0, 1, \cdots, 2n-1}$, so that if $j$ is paired to $j'$, then
either $u_j = u_{j'}$ and $u_{j+1} = u_{j'+1}$, or $u_j = u_{j'+1}$ and $u_{j+1}=u_{j'}$. Every path in
$\bf{Paths}^0_{2n}$ has at least one pairing.

Now we divide the couples (path, pairing) into equivalence classes. The procedure is a slight elaboration
of the one from \cite[Section II.1]{FS}; it works as follows.

We look for a pair $(j, j')$ so that $j$ is paired to $j'$ and $j+1$ is paired to either $j'+1$ or $j'-1$, and unite the
$j$-th and the $j+1$-th edge into a single one (and the same for their counterparts.) Continuing this process,
we arrive at a multigraph $G = (V, E)$ with a marked vertex $v_0$ (corresponding to $0$) together with a path $p$
which passes every one of its edges twice. We call the equivalence class $D = (G= (V, E), p)$ a {\em diagram}
of order $1$.

Paths which contain edges passed more than twice admit more than one pairing, and therefore correspond
to more than one diagram. If a path passes a certain edge $4$ times, we correspond to it a diagram
of order $2$ in a similar way, and so forth. Thus, every path which corresponds to a high-order diagram
also corresponds to a sequence of diagrams of lower order, and the total number of paths
in $\bf{Paths}_{2n}$ can be computed using the inclusion--exclusion formula:
\begin{multline} \text{Paths}_{2n} \\
    = \sum_{j \geq 1} (-1)^{j+1} \sum_{\text{$D$ of order $j$}}
    \# \left\{ \text{paths of length $2n$ corresponding to $D$} \right\}~.\end{multline}
Let us compute the number of paths corresponding to a diagram $D = (G, p)$.
Fix an arbitrary ordering of the edges in $E$. To construct a path corresponding to $D$ in $\mathbb{Z}(W)$,
we first choose the positions $R_v \in \mathbb{Z}$ corresponding to $v \in V$, and the number $n_e$
of edges corresponding to every $e \in E$, so that $R_{v_0} = 0$ and $n_1 + \cdots + n_E = n$.

Then, for $e = (u, v)$ from $1$ to $E$,  we start a random walk from $R_u$ conditioned not to backtrack,
and denote by $\widetilde{\mathbb{P}}_e \left\{ R_u \overset{n_e}{\rightsquigarrow} R_v \right\}$ the probability
that it arrives at $R_v$ after $n_e$ steps without violating the non-backtracking conditions created
by the edges corresponding to $f < e$ from the previous steps at its first and last step.
The number of paths corresponding to $D$ is then equal to
\[ 2W \, (2W-1)^{n-1} \, \sum_{{\bf R}} \sum_{\bf n}
    \prod_{e=1}^E \widetilde{\mathbb{P}}_e \left\{ R_u \overset{n_e}{\rightsquigarrow} R_v \right\}~.\]

One can proceed using the expression due to Smilansky \cite{Sm} for $\widetilde{\mathbb{P}}_e$ in terms of the transition matrix
$P$ of the random walk on $\mathbb{Z}(W)$. Instead, let us denote by
${\mathbb{P}} \left\{ R_u \overset{n_e}{\rightsquigarrow} R_v \right\}$ the probability that
a random walk from $R_u$ and conditioned not to backtrack arrives at $R_v$ after $n_e$ steps.
Then
\[ \widetilde{\mathbb{P}}_e \left\{ R_u \overset{n_e}{\rightsquigarrow} R_v \right\}
    \approx {\mathbb{P} \left\{ R_u \overset{n_e}{\rightsquigarrow} R_v \right\}}~.\]
More formally, the left-hand side of the last equality can be expressed as the right-hand side
plus a sum of similar terms of the same form with different parameters. Therefore
we can essentially regard the above approximation as an identity.

The degree of the marked vertex in a diagram of order one is at least two, and the degree of every
other vertex is at least three. If these inequalities are saturated, the diagram is called simple.
For simplicity, let us focus on simple diagrams of order one (which were called diagrams in \cite{FS}.)

The genus of a diagram is defined as $\gamma(D) = E - V + 1$. A diagram of genus $\gamma$ satisfies
$E = 3\gamma-2$, $V = 2\gamma-1$; the number $D(\gamma)$ of diagrams of genus $\gamma$ satisfies
(see \cite{FS})
\[ (\gamma/C)^\gamma \leq D(\gamma) \leq (C\gamma)^\gamma~. \]
We remark that there is exactly one diagram of genus $\gamma=1$; it contains one vertex, and one
edge (which is a loop connecting the vertex to itself.)

\section{Embeddings into $\mathbb{Z}(W)$}\label{s:emb}

Fix a multigraph $G = (V, E)$ with a marked vertex $v_0 \in V$. For $g \in \mathbb{C}$
with $|g|=1$, set
\begin{equation}\label{eq:defemb1}
\begin{split}
\widetilde{\text{Emb}}(G)
    &= \widetilde{\text{Emb}}(G; g, \epsilon) \\
    &= \sum_{\bf R} \sum_{{\bf n}: E \to \mathbb{N}}
        \varphi_q(\sum_{e \in E} n_e \epsilon) \prod_{e=(u, v) \in E} g^{n_e} {\widetilde{\mathbb{P}}_e \left\{ R_u \overset{n_e}{\rightsquigarrow} R_v \right\}}~,
\end{split}
\end{equation}
where the exterior sum is over ${\bf R}: V \to \mathbb{Z}$ such that $R(v_0) = 0$.

Our goal in this section is to prove

\begin{prop}\label{p:dest}\hfill
\begin{enumerate}
\item For any $g$ with $|g|=1$,
\[ |\widetilde{\text{Emb}} (G)| \leq \left( \frac{C(q)}{|1-g|} \right)^{E+1} \, \left(\frac{\log W}{W}\right)^{E-V+1}~, \]
where $C(q)>0$ is a constant depending only on $q$.
\item If $G$ is the multigraph corresponding to the (unique) diagram of genus $\gamma=1$,
the same bound holds without the logarithmic factor.
\end{enumerate}
\end{prop}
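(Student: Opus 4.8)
The plan is to bound $\widetilde{\text{Emb}}(G)$ by summing over the combinatorial data $(\mathbf{R},\mathbf{n})$ in two stages: first sum over the lengths $\mathbf{n}$ with the positions $\mathbf{R}$ fixed, then sum over $\mathbf{R}$. The key analytic input for the first stage is that the non-backtracking random walk on $\mathbb{Z}(W)$ has a local limit theorem at scale $W^{-1}$: for $n \gtrsim 1$ one expects $\mathbb{P}\{R_u \overset{n}{\rightsquigarrow} R_v\} \approx c\, n^{-1/2}$ on a window of width $\sim W$ around $R_u$, decaying off that window, while for $n$ of order $1$ it is simply $O(W^{-1})$ times a geometric-type factor. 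I would first reduce (using the remark in Section~\ref{s:diags} that $\widetilde{\mathbb{P}}_e$ can essentially be replaced by $\mathbb{P}$) to estimating the cleaner sum with $\mathbb{P}\{R_u \overset{n_e}{\rightsquigarrow} R_v\}$ in place of $\widetilde{\mathbb{P}}_e$; the extra terms have strictly fewer free parameters and are subsumed into the constant $C(q)$.

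Next I would insert the factor $g^{n_e}$ and $\varphi_q(\sum n_e \epsilon)$ and handle the sum over $\mathbf{n}$. Because $\varphi_q$ depends only on $N := \sum_e n_e$, I would write $\varphi_q(N\epsilon) = \widehat{\varphi_q}$-type data and use Lemma~\ref{l:sumsimplex}: the sum $\sum_{\mathbf{n}\in\Delta_{E,N}}\prod_e z_e^{n_e}$ becomes a divided difference $m_{N-1}[z_1,\dots,z_E]$, where the $z_e$ encode the per-edge generating function $\sum_{n\ge1} g^n \mathbb{P}\{R_u \overset{n}{\rightsquigarrow} R_v\}\, w^n$ evaluated at the relevant arguments. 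Then Lemma~\ref{l:meanval} converts the divided difference into an $(E-1)$-th derivative of that generating function at an interior point, which is where the factor $(C(q)/|1-g|)^{E+1}$ comes from: each differentiation in the spectral variable, together with the pole structure of the resolvent-like generating function near $g=1$, costs a factor $C(q)/|1-g|$, and there are $E$ edges plus one overall factor from the $\varphi_q$-regularisation. The decay of $\widehat{\varphi_q}$ off the diagonal (Lemma~\ref{l:fq}, via $\widehat{\varphi_q}=F_q^2$) is what makes these sums converge and contributes the $q$-dependence.

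For the second stage, the sum over $\mathbf{R}: V\to\mathbb{Z}$ with $R(v_0)=0$: each edge $e=(u,v)$ contributes, after the $\mathbf{n}$-sum, a kernel that is $O(W^{-1})$ in magnitude and supported (up to rapidly decaying tails) on $|R_u - R_v| \lesssim W$. So summing over the $V-1$ free positions gives a factor $W^{V-1}$ from the volume of the support, against $W^{-E}$ from the $E$ edge-kernels — yielding $W^{V-1-E} = W^{-(E-V+1)}$, i.e. $W^{-\gamma(G)}$. The logarithmic factor $(\log W/W)^{E-V+1}$ rather than $W^{-(E-V+1)}$ arises because the $n^{-1/2}$ local limit behaviour is only valid for $n$ up to some polynomial cutoff (beyond which the super-exponential decay of $\varphi_q(n\epsilon)$ for $n\epsilon \gg 1$, i.e. $n \gg W^{0.99}$, kills the tail), and summing $n^{-1}$-type contributions over dyadic scales up to that cutoff produces a single $\log W$ per independent cycle; one must check the $\log W$ powers do not proliferate beyond $E-V+1$, which they do not since a spanning tree's worth of edges can be summed without loss.

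For part (2), when $G$ is the single-vertex, single-loop diagram of genus $1$: here $V=1$, $E=1$, there is no free position to sum ($R(v_0)=0$ is forced), and the single edge is a loop $u=v=0$, so $\widetilde{\text{Emb}}(G) = \sum_n \varphi_q(n\epsilon) g^n \mathbb{P}\{0 \overset{n}{\rightsquigarrow} 0\}$. The return probability $\mathbb{P}\{0\overset{n}{\rightsquigarrow}0\}$ is $0$ for odd $n$ and $\asymp n^{-1/2} W^{-1}$ for even $n$, but now there is a genuine improvement: the alternating/oscillatory structure in $g$ combined with the explicit $n^{-1/2}$ decay (rather than merely $O(W^{-1})$ on a window) lets one sum $\sum_n n^{-1/2} W^{-1} |g|^n$-type series and extract the clean bound $(C(q)/|1-g|)^{2}\cdot W^{-1}$ with no $\log W$, because there is only one scale of $n$ that matters — equivalently, a single-loop has no cycle that needs a dyadic decomposition of the position sum. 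I expect the main obstacle to be the first stage: making rigorous the replacement of $\widetilde{\mathbb{P}}_e$ by $\mathbb{P}$ and controlling, uniformly in the edge-ordering, the non-backtracking local limit theorem together with the derivative bounds on its generating function near $g=1$, so that the power $(C(q)/|1-g|)^{E+1}$ and exactly the right power of $\log W$ both come out. The bookkeeping of how many $\log W$ factors survive — tied to $\gamma(G)=E-V+1$ being the first Betti number, so that a spanning tree contributes no logs — is the delicate combinatorial point.
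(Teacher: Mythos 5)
Your outline assembles several of the correct ingredients (the reduction from $\widetilde{\mathbb{P}}_e$ to $\mathbb{P}$, Lemma~\ref{l:sumsimplex}, the mean-value Lemma~\ref{l:meanval}, and the decay of $\widehat{\varphi_q}=F_q^2$ from Lemma~\ref{l:fq}), but the central mechanism of the paper's proof is missing: the passage to Fourier space. In the paper one reduces, via (\ref{eq:abls}), to plain powers $P^{n_e}(R_u,R_v)$ of the transition matrix, and then uses translation invariance to write $P^{n}(R_u,R_v)=\int_0^1 w(\xi)^n e^{2\pi i \xi (R_u-R_v)}\,d\xi$. Only after this diagonalisation does each edge contribute a pure scalar power $(g\,w(\xi_e))^{n_e}$, so that Lemma~\ref{l:sumsimplex} applies literally with $z_e=g\,w(\xi_e)$; and the sum over $\mathbf{R}$ converts the oscillatory factors into the Kirchhoff constraint, so the remaining integration runs over an $(E-V+1)$-dimensional subspace (Lemma~\ref{l:frepr}). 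Then Lemma~\ref{l:meanval} together with the bound $|S_\epsilon^{(j)}(z)|\leq (C(q)/|1-z|)^{j+1} j!$ (proved by contour deformation using Lemma~\ref{l:fq}) produces the factor $(C(q)/|1-g|)^{E+1}$, and each of the $E-V+1$ free frequency variables, estimated by $|w(\xi)|\leq (1+cW\min(\xi,1-\xi))^{-1}$ from (\ref{eq:wbound}), yields one factor $\log W/W$. In your version the key step does not parse: for fixed $R_u,R_v$ the quantity $\mathbb{P}\{R_u\overset{n_e}{\rightsquigarrow}R_v\}$ is \emph{not} of the form $z_e^{n_e}$ for a scalar $z_e$, so the sum over $\Delta_{E,N}$ is not a divided difference, and the phrase ``the $z_e$ encode the per-edge generating function'' is precisely the gap that the Fourier representation is there to fill.

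The real-space power counting you substitute for the Kirchhoff dimension count is also not right as stated: the $n$-step kernel has size $\approx (\sqrt{n}\,W)^{-1}$ and spreads over a window of width $\approx \sqrt{n}\,W$, not ``$O(W^{-1})$ supported on $|R_u-R_v|\lesssim W$''; and the claim that exactly $E-V+1$ logarithms survive---which you yourself flag as the delicate point---is left as an assertion about spanning trees, with the coupling of the $n_e$ through $\varphi_q(\sum_e n_e\epsilon)$ and the oscillating weights $g^{n_e}$ preventing any clean per-edge factorisation. In the paper this bookkeeping is automatic: the number of independent frequency integrals is the dimension of the Kirchhoff subspace, $E-V+1$. (Minor further inaccuracies: for $W\geq 2$ the graph $\mathbb{Z}(W)$ is not bipartite, so the return probability does not vanish at odd times; and your Abel-summation treatment of the genus-one loop, while plausibly workable, would still require a quantitative local limit theorem for the non-backtracking walk and control of the $\widetilde{\mathbb{P}}$ corrections, whereas the paper gets part~2 directly from Lemma~\ref{l:frepr}.) So the skeleton matches, but without the representation of Lemma~\ref{l:frepr} neither the applicability of the divided-difference step nor the exponent $E-V+1$ of $\log W/W$ is actually established.
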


\begin{rmk*}
The logarithmic factor is probably unnecessary in the general case as well; this is however not essential
for our purposes.
\end{rmk*}

To make the computations more transparent, we make several simplifications. First, set
\begin{equation}\label{eq:defemb}
\begin{split}
{\text{Emb}}(G)
    &= {\text{Emb}}(G; g, \epsilon) \\
    &= \sum_{\bf R} \sum_{{\bf n}: E \to \mathbb{N}}
        \varphi_q(\sum_{e \in E} n_e \epsilon) \prod_{e=(u, v) \in E} g^{n_e} {{\mathbb{P}}_e \left\{ R_u \overset{n_e}{\rightsquigarrow} R_v \right\}}~,
\end{split}
\end{equation}

By the argument sketched in the previous section, it is sufficient to prove the bound for $\text{Emb}$.

Let $P$ be the transition matrix of the usual random walk on $\mathbb{Z}(W)$. From \cite{ABLS},
\begin{equation}\label{eq:abls}
{\mathbb{P} \left\{ R_u \overset{n_e}{\rightsquigarrow} R_v \right\}}
    = \frac{1}{\sqrt{(2W-1)^{n_e}}} U_{n_e, W}\left(\frac{W P}{\sqrt{2W-1}}\right)(R_u, R_v)~,
\end{equation}
where the last brackets stand for taking matrix elements. Instead of $\text{Emb} (G)$,
we shall prove the bound for
\begin{equation}\label{eq:defemb2}
\begin{split}
{\text{Emb}^\#}(G)
    &= {\text{Emb}^\#}(G; g, \epsilon) \\
    &= \sum_{\bf R} \sum_{{\bf n}: E \to \mathbb{N}}
        \varphi_q(\sum_{e \in E} n_e \epsilon) \prod_{e=(u, v) \in E} g^{n_e} P^{n_e}(R_u, R_v)~.
\end{split}
\end{equation}
Using (\ref{eq:abls}), one can repeat the argument and obtain the same bound for $\text{Emb} (G)$
(and hence also for $\widetilde{\text{Emb}}(G)$.)

We start with a representation of $\text{Emb}^\#(G)$ in Fourier space. The operator $P$
is translation-invariant, therefore diagonal in Fourier space: setting $e_\xi(n) = \exp(2\pi i \xi n)$,
$0 \leq \xi < 1$, we have:
\begin{equation}\label{eq:pfour}
P e_\xi =  w(\xi)\,e_\xi~,
\end{equation}
where
\[ w(\xi)
    = \frac{1}{W} \sum_{j=1}^W \cos(2 \pi j\xi)
    = \frac{\sin (\pi W\xi)}{W \sin (\pi \xi)} \, \cos (\pi (W+1)\xi)~.\]
For future reference, we remark that
\begin{equation}\label{eq:wbound}
|w(\xi)| \leq \frac{1}{1 + c W \min(\xi, 1-\xi)}~.
\end{equation}

Choose an ordering of the vertices of $G$, and for every edge $e = (u, v) \in E$, $u \prec v$, introduce a variable
$\xi_e = \xi_{(u,v)}$; if $u \neq v$, set $\xi_{(v, u)} = - \xi_{(u,v)}$.
\begin{lemma}\label{l:frepr}
Set $S_\epsilon(z) = \sum_{n \geq 1} \varphi_q(n\epsilon) z^{n-1}$. Then
\[ \text{Emb}^\#(G) = \idotsint\limits_{[0, 1]^E} d\delta_\text{Kirch}({\bm \xi})
    S_\epsilon[gw(\xi_1), \cdots, gw(\xi_E)] \, \prod_{e \in E} (gw(\xi_e))~, \]
where $\delta_\text{Kirch}$ is the Lebesgue measure restricted to the $(E-V+1)$-dimensional
subspace defined by the Kirchhoff constraints
\[ \forall u \in V \quad \sum_{(u, v) \in E~, \, v \neq u} \xi_{(u, v)} = 0~. \]
\end{lemma}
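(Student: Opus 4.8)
The plan is to diagonalise $P$ in Fourier space, then carry out the sum over the edge multiplicities $\mathbf n$ (which produces the divided difference $S_\epsilon[\cdots]$ via Lemma~\ref{l:sumsimplex}), and finally carry out the sum over the vertex positions $\mathbf R$ (which produces the measure $\delta_{\text{Kirch}}$ via Poisson summation). Since $P$ is translation invariant, (\ref{eq:pfour}) together with Plancherel gives $P^{n}(a,b)=\int_{0}^{1}w(\xi)^{n}e^{2\pi i\xi(a-b)}\,d\xi$ for $a,b\in\mathbb Z$. Substituting this into (\ref{eq:defemb2}), using for the edge $e=(u,v)$ the variable $\xi_e$ introduced just before the lemma (the sign convention $\xi_{(v,u)}=-\xi_{(u,v)}$ is harmless because $w$ is even), one obtains
\[ \text{Emb}^{\#}(G)=\sum_{\mathbf R:\,R(v_0)=0}\ \sum_{\mathbf n:\,E\to\mathbb N}\varphi_q\Big(\epsilon\sum_{e}n_e\Big)\int_{[0,1]^{E}}\prod_{e=(u,v)}\big(gw(\xi_e)\big)^{n_e}\,e^{2\pi i\xi_e(R_u-R_v)}\,d\bm\xi\,. \]
Before manipulating this I would record that the sum defining $\text{Emb}^{\#}(G)$ is absolutely convergent: $P^{n}(a,b)$ vanishes unless $|a-b|\le nW$, so for a fixed $\mathbf n$ with $\sum_e n_e=n$ there are at most $(CnW)^{V-1}$ admissible $\mathbf R$ (walk along a spanning tree of $G$), while $|P^{n}(a,b)|\le1$ and $\varphi_q$ decays faster than any exponential; this legitimises all of the interchanges below except the one flagged at the end.

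For the $\mathbf n$-sum I would group the terms according to $n=\sum_e n_e$ and apply Lemma~\ref{l:sumsimplex} with $z_e=gw(\xi_e)$, which turns $\sum_{\mathbf n}\varphi_q(\epsilon n)\prod_e z_e^{n_e}$ into $\big(\sum_{n\ge E}\varphi_q(\epsilon n)\,m_{n-1}[z_1,\dots,z_E]\big)\prod_e z_e$. The bracket equals $S_\epsilon[z_1,\dots,z_E]$: the divided difference is linear, and it annihilates polynomials of degree $<E-1$ (Lemma~\ref{l:meanval}), so the missing terms $n<E$ contribute nothing; moreover the super-exponential decay of $\varphi_q$ makes $S_\epsilon$ entire, which both justifies interchanging the divided difference with the sum over $n$ and shows that $S_\epsilon[\cdot]$, being a divided difference of an entire function, extends to a smooth function of possibly coinciding arguments — so the apparent singularity in the explicit formula for divided differences is irrelevant, the set $\{\bm\xi:\ w(\xi_i)=w(\xi_j)\text{ for some }i\ne j\}$ having Lebesgue measure zero. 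After this step the integrand is $S_\epsilon[gw(\xi_1),\dots,gw(\xi_E)]\prod_e\big(gw(\xi_e)\big)$ times $\prod_{e=(u,v)}e^{2\pi i\xi_e(R_u-R_v)}$.

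It remains to sum over $\mathbf R$. Regrouping the exponentials by vertex, $\prod_{e=(u,v)}e^{2\pi i\xi_e(R_u-R_v)}=\prod_{w\in V}e^{2\pi iR_w\sigma_w(\bm\xi)}$ with $\sigma_w(\bm\xi)=\sum_{(w,v)\in E,\,v\ne w}\xi_{(w,v)}$ (loops cancel out), and $R_{v_0}=0$ removes the $w=v_0$ factor. Since the integrand is $\mathbb Z^{E}$-periodic in $\bm\xi$ it may be viewed on the torus $(\mathbb R/\mathbb Z)^{E}$, and summing $R_w$ over $\mathbb Z$ for each $w\ne v_0$ produces, by Poisson summation on the torus, the point mass of $\sigma_w$ at $0$; the product over $w\ne v_0$ is the Haar measure on the sub-torus $\{\sigma_w\equiv0\ (w\ne v_0)\}$, which — after a short check that the incidence matrix of $G$ with one row deleted is unimodular onto $\mathbb Z^{V-1}$ (a spanning tree), so that no extraneous normalising factor appears — is exactly $\delta_{\text{Kirch}}$ (the constraint at $v_0$ being automatic from $\sum_{w\in V}\sigma_w=0$). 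Combining the three steps gives the formula of Lemma~\ref{l:frepr}. The genuinely delicate point — and the main obstacle — is that after interchanging $\sum_{\mathbf R}$ with the integral the sum is only conditionally convergent; I would make this step rigorous by inserting a cutoff $\prod_{w}\eta^{|R_w|}$ with $\eta\uparrow1$ (or a Gaussian cutoff in $\mathbf R$), for which Poisson summation applies verbatim, and then passing to the limit by dominated convergence on the $(E-V+1)$-dimensional constraint, using once more the super-exponential decay of $\varphi_q$ to dominate uniformly.
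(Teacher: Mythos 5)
Your proposal is correct and follows essentially the same route as the paper: diagonalise $P$ in Fourier space via (\ref{eq:pfour}), perform the sum over edge multiplicities using Lemma~\ref{l:sumsimplex} to produce the divided difference $S_\epsilon[\cdots]$, and perform the sum over vertex positions to produce $\delta_{\text{Kirch}}$. The only differences are the order in which you carry out the two summations and the additional (welcome, but not essential to the paper's argument) care you devote to justifying the interchanges and to the behaviour of the divided difference at coinciding arguments.
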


\begin{proof}
From (\ref{eq:pfour}),
\[ P = \int_0^1 w(\xi) e_\xi \otimes e_\xi d\xi~, \]
and
\begin{equation}\label{eq:frepr}
P^n(R_1, R_2) = \int_0^1 w(\xi)^n \exp(2\pi i \xi \, (R_1 - R_2)) d\xi~.
\end{equation}
Now substitute (\ref{eq:frepr}) into (\ref{eq:defemb2}). We obtain:
\begin{multline*}
\text{Emb}^\#(G) \\
= \sum_{\bf R} \sum_{\bf n} \varphi(\sum_e n_e \epsilon)
    \idotsint\limits_{[0, 1]^E} \prod_{e = (u,v)} \Big[ d\xi_e (gw(\xi_e))^{n_e} \exp(2\pi i \xi_e \, (R_u - R_v)) \Big]~.
\end{multline*}
Exchanging the summation over ${\bf R}$ with the integral, we see that
\begin{equation}
\text{Emb}^\#(G) = \sum_{\bf n} \varphi(\sum_e n_e \epsilon)
    \idotsint\limits_{[0, 1]^E} d\delta_\text{Kirch}({\bm \xi}) \prod_{e = (u,v)} (gw(\xi_e))^{n_e}~.
\end{equation}
Now we can exchange the sum over ${\bf n}$ with the integral. According to Lemma~\ref{l:sumsimplex},
\[ \sum_{n_1 + \cdots + n_E = n} \prod_e (gw(\xi_e))^{n_e}
    = m_{n-1}[gw(\xi_1), \cdots, gw(\xi_E)] \, \prod_e (gw(\xi_e))~, \]
therefore
\[\begin{split}
&\sum_{\bf n} \varphi(\sum_e n_e \epsilon) \prod_{e = (u,v)} (gw(\xi_e))^{n_e} \\
    &\qquad = \sum_{n \geq 1} \varphi(n \epsilon) m_{n-1}[gw(\xi_1), \cdots, gw(\xi_E)]\,\, \prod_e (gw(\xi_e)) \\
    &\qquad = S_\epsilon[gw(\xi_1), \cdots, gw(\xi_E)]\,\, \prod_e (gw(\xi_e))~.
\end{split}\]
\end{proof}

According to the mean-value theorem (Lemma~\ref{l:meanval}),
\begin{equation}\label{eq:mv4s}
S_\epsilon[gw(\xi_1), \cdots, gw(\xi_E)] = \frac{S_\epsilon^{(E-1)}(gw^*)}{(E-1)!}
\end{equation}
for some $-1 < w^* < 1$. To conclude the proof of Proposition~\ref{p:dest}, we need one more lemma:

\begin{lemma}\label{l:s.eps} For $j \geq 0$ and $|z|\leq 1$~,
\[ |S_\epsilon^{(j)}(z)| \leq \left(\frac{C(q)}{|1-z|}\right)^{j+1} j!~,\]
where the constant $C(q)$ does not depend on $\epsilon>0$.
\end{lemma}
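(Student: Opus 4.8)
The plan is to pass to the Fourier side and realise $S_\epsilon$ as an integral. By Fourier inversion, $\varphi_q(t) = \int_{\mathbb{R}} \widehat{\varphi_q}(\xi)\, e^{2\pi i t\xi}\,d\xi$; substituting $t = n\epsilon$, multiplying by $z^{n-1}$ and summing over $n\ge 1$ (legitimate for $|z| < 1$, since $\widehat{\varphi_q}\in L^1$), one gets
\[
S_\epsilon(z) = \int_{\mathbb{R}} \widehat{\varphi_q}(\xi)\, \frac{e^{2\pi i\epsilon\xi}}{1 - z\,e^{2\pi i\epsilon\xi}}\, d\xi, \qquad |z| < 1,
\]
and, differentiating $j$ times under the integral sign,
\[
S_\epsilon^{(j)}(z) = j!\int_{\mathbb{R}} \widehat{\varphi_q}(\xi)\, \frac{e^{2\pi i(j+1)\epsilon\xi}}{(1 - z\,e^{2\pi i\epsilon\xi})^{j+1}}\, d\xi .
\]
The poles of the integrand in $\xi$ solve $e^{2\pi i\epsilon\xi} = 1/z$, hence lie on the line $\Im\xi = (\log|z|)/(2\pi\epsilon) \le 0$ for $|z|\le 1$: they reach the real axis only as $|z|\to 1$, and meet $\xi = 0$ only when $z = 1$.

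Since $\widehat{\varphi_q} = F_q^2$ is entire and, by Lemma~\ref{l:fq}, $F_q$ decays faster than any polynomial in the sectors $\{|\arg\xi| < \tfrac{\pi}{4q} - \delta\}$ and $\{|\arg\xi - \pi| < \tfrac{\pi}{4q} - \delta\}$, I would deform the contour from $\mathbb{R}$ to $\Gamma = \Gamma_-\cup\Gamma_+$, where $\Gamma_+ = \{t e^{i\psi} : t\ge 0\}$ and $\Gamma_- = \{t e^{i(\pi-\psi)} : t\ge 0\}$ (traversed from $\infty e^{i(\pi-\psi)}$ through $0$ to $\infty e^{i\psi}$) with $\psi = \pi/(8q)$. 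The arcs at infinity contribute nothing, since the super-polynomial decay of $F_q$ beats the at most exponential growth of $e^{2\pi i(j+1)\epsilon\xi}$ while $1/|1 - z e^{2\pi i\epsilon\xi}|^{j+1}$ stays bounded there (the poles are in the closed lower half-plane); and the deformation is legitimate because, for $|z|<1$, those poles have $\Im\xi < 0$ strictly and so avoid the closed region between $\mathbb{R}$ and $\Gamma$. For $|z|\le 1$, $z\ne 1$, the deformed integral converges absolutely and is continuous (indeed analytic for $|z|<1$) in $z$, so it agrees with the entire function $S_\epsilon^{(j)}$, giving
\[
S_\epsilon^{(j)}(z) = j!\int_{\Gamma} \widehat{\varphi_q}(\xi)\, \frac{e^{2\pi i(j+1)\epsilon\xi}}{(1 - z\,e^{2\pi i\epsilon\xi})^{j+1}}\, d\xi \qquad (|z|\le 1,\ z\ne 1).
\]

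The heart of the matter is a pointwise lower bound for $|1 - z e^{2\pi i\epsilon\xi}|$ on $\Gamma$. On $\Gamma$ one has $e^{2\pi i\epsilon\xi} = e^{-b}e^{i\beta}$ with $\beta\in\mathbb{R}$ and $b = |\beta|\tan\psi\ge 0$; writing $z = r e^{i\alpha}$ with $0 < r\le 1$, $|\alpha|\le\pi$, I claim there is $c = c(\psi) > 0$ with
\[
\bigl|1 - r e^{-b}e^{i(\alpha+\beta)}\bigr| \ \ge\ c\,\bigl|1 - r e^{i\alpha}\bigr| = c\,|1 - z| .
\]
To prove it: if $r e^{-b} \le \tfrac12$ the left side is $\ge\tfrac12$ while $|1-z|\le 2$; otherwise $r > \tfrac12$ and $b < \log 2$, and one invokes the elementary comparison $|1 - w| \asymp (1 - |w|) + \operatorname{dist}(\arg w,\,2\pi\mathbb{Z})$ valid for $|w|\ge\tfrac12$, together with $1 - r e^{-b}\gtrsim_\psi (1-r) + |\beta|$, $\operatorname{dist}(\alpha+\beta,\,2\pi\mathbb{Z}) \ge \max(0,\,|\alpha| - |\beta|)$ and $|\beta| + \max(0,|\alpha|-|\beta|) = \max(|\alpha|,|\beta|)\ge|\alpha|$, to get $|1 - r e^{-b}e^{i(\alpha+\beta)}| \gtrsim_\psi (1-r) + |\alpha| \asymp |1-z|$. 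This is the step I expect to take the most care, though it is entirely elementary.

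Given the claim, and since $|e^{2\pi i(j+1)\epsilon\xi}| = e^{-2\pi(j+1)\epsilon\,\Im\xi}\le 1$ on $\Gamma$,
\[
|S_\epsilon^{(j)}(z)| \ \le\ \frac{j!}{\bigl(c(\psi)\,|1-z|\bigr)^{j+1}} \int_{\Gamma} |\widehat{\varphi_q}(\xi)|\,|d\xi| \ =\ \frac{M_q\, j!}{\bigl(c(\psi)\,|1-z|\bigr)^{j+1}},
\]
where $M_q = 2\int_0^\infty |F_q(t e^{i\psi})|^2\,dt < \infty$ by Lemma~\ref{l:fq}; crucially, neither $M_q$ nor $c(\psi)$ depends on $\epsilon$. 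Taking $C(q) = \max(M_q,1)/c(\psi)$ with $\psi = \pi/(8q)$, and using $M_q^{1/(j+1)}\le\max(M_q,1)$, yields $|S_\epsilon^{(j)}(z)| \le (C(q)/|1-z|)^{j+1}\, j!$ for all $j\ge 0$ and $|z|\le 1$, which is the assertion (vacuous at $z=1$).
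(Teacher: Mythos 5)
Your argument is correct and follows essentially the same route as the paper's proof: Fourier representation of $\varphi_q$ via $F_q^2$, deformation of the $\xi$-integral onto rays inside the decay sectors of Lemma~\ref{l:fq}, differentiation in $z$ producing the factor $j!\,(1-z e^{2\pi i\epsilon\xi})^{-(j+1)}$, and then absolute-value bounds. The only difference is that you make explicit the elementary lower bound $|1-z e^{2\pi i \epsilon\xi}|\geq c(q)\,|1-z|$ on the deformed contour (and the continuity argument extending the identity to $|z|\leq 1$), which the paper leaves implicit in ``taking absolute values and applying Lemma~\ref{l:fq}''; this is a useful fleshing-out, not a different method.
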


\begin{proof}[Proof of Proposition~\ref{p:dest}]

Applying Lemma~\ref{l:frepr}, the relation (\ref{eq:mv4s}), and Lem\-ma~\ref{l:fq},
we obtain:
\begin{equation}
|\text{Emb}^\# (G)|
    \leq \left(\frac{C(q)}{|1-g|}\right)^{E+1} \int d\delta_{\text{Kirch}}({\bm \xi}) \prod_{e \in E} |w(\xi_e)| d\xi~.
\end{equation}
The integration is over an $(E-V+1)$-dimensional subspace,  therefore the bound (\ref{eq:wbound})
concludes the proof of 1.\\

The estimate 2.\ can be verified directly from Lemma~\ref{l:frepr}.

\end{proof}

\begin{proof}[Proof of Lemma~\ref{l:s.eps}]
First,
\[
\varphi_q(n\epsilon)
    = \int_{-\infty}^\infty \exp(2\pi i x \xi) \, \hat{\varphi}_q(\xi) \, d\xi
    = \int_{-\infty}^\infty \exp(2\pi i n \epsilon \xi) \, F_q(\xi)^2 \, \frac{d\xi}{A_q}~.
\]
Fix $0<\phi<\pi/2$ satisfying the assumptions of Lemma~\ref{l:fq}, and deform the contour of integration
to $L = \left\{ \xi \, \big| \, \arg \xi \in \{\phi, \pi - \phi \} \right\}$. Now
\[\begin{split} S_\epsilon(z)
    &= \int_L F_q(\xi)^2 \sum_{n \geq 1} \exp(2\pi i n \epsilon \xi) z^{n-1}\, \frac{d\xi}{A_q} \\
    &= \int_L F_q(\xi)^2  \, \frac{\exp(2\pi i \epsilon \xi)}{(1 - z \exp(2\pi i \epsilon \xi))} \, \frac{d\xi}{A_q}~,
\end{split}\]
and
\[ S_\epsilon^{(j)} (z) = j! \, \int_L F_q(\xi)^2 \,
    \frac{\exp(2\pi i \epsilon \xi)}{(1 - z \exp(2\pi i \epsilon \xi))^{j+1}} \, \frac{d\xi}{A_q}~. \]
Taking absolute values and applying Lemma~\ref{l:fq}, we conclude the proof.

\end{proof}

\section{Proof of Theorem}\label{s:pf}

Let $q$ be a large natural number, and let $\eta>0$ be a small real number; we shall
choose them later. We shall work with the function $\varphi_q$ from (\ref{eq:phiq}),
or rather with its truncated version
\begin{equation}
\widetilde\varphi(t) = \varphi_q(t) \mathbb{1}_{\{|t| \leq W^\eta\}}.
\end{equation}
Denote
\begin{equation}
\widetilde{f}(E) = 1 + 2 \sum_{n\geq1} \widetilde\varphi_q(n\epsilon) T_n(E_0) \, T_n(E)~.
\end{equation}

\begin{lemma}\label{l:ftilde}
For any $q > 50$ and $\eta < 1/100$
\begin{equation}\label{eq:l:ftilde}
\langle \widetilde{f}(H)(0, 0) \rangle = 1 + \varphi_q(2\epsilon)(1 - 2E_0^2) + O(1/W)~.
\end{equation}
\end{lemma}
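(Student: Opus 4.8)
The plan is to expand the Chebyshev series, compute the $n=2$ term by hand, and reduce everything else to Proposition~\ref{p:dest}. Write $E_0=\cos\theta_0$, $g=e^{i\theta_0}$. Then
\[ \langle\widetilde f(H)(0,0)\rangle = 1 + 2\sum_{n\ge1}\widetilde\varphi(n\epsilon)\,T_n(E_0)\,\langle T_n(H)(0,0)\rangle . \]
Because $T_n(-x)=(-1)^nT_n(x)$ while $\langle H^m(0,0)\rangle=0$ for odd $m$ (an odd-length closed path cannot traverse every edge an even number of times, so the product of the independent signs has zero mean), only $n=2k$ contributes. For $k=1$ one computes directly $\langle H^2(0,0)\rangle=\tfrac{W}{2(2W-1)}$, hence $\langle T_2(H)(0,0)\rangle=\tfrac{1-W}{2W-1}=-\tfrac12+O(1/W)$; together with $|T_2(E_0)|\le1$, $0\le\widetilde\varphi\le1$, $\widetilde\varphi(2\epsilon)=\varphi_q(2\epsilon)+O(1/W)$ and $1-2E_0^2=-T_2(E_0)$, the $k=1$ term equals $\varphi_q(2\epsilon)(1-2E_0^2)+O(1/W)$. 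It therefore remains to show $R:=2\sum_{k\ge2}\widetilde\varphi(2k\epsilon)\,T_{2k}(E_0)\,\langle T_{2k}(H)(0,0)\rangle=O(1/W)$.

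For $R$ I would invoke Sections~\ref{s:meth}--\ref{s:diags}. By (\ref{eq:reprtn})--(\ref{eq:paths_paths0}), for $k\ge2$ one has $\langle T_{2k}(H)(0,0)\rangle=\mathrm{Paths}^0_{2k}/\bigl(2(2W-1)^{k}\bigr)+O(W^{1-k})$, and the $O(W^{1-k})$ terms sum to $O(1/W)$. For the remaining sum $\sum_{k\ge2}\widetilde\varphi(2k\epsilon)T_{2k}(E_0)\,\mathrm{Paths}^0_{2k}/(2W-1)^{k}$ I would substitute the inclusion--exclusion over diagrams $D=(G=(V,E),p)$ of Section~\ref{s:diags} together with the count $2W(2W-1)^{k-1}\sum_{\bf R}\sum_{\sum_e n_e=k}\prod_{e=(u,v)}\widetilde{\mathbb P}_e\{R_u\overset{n_e}{\rightsquigarrow}R_v\}$ of paths realising $D$, and write $T_{2k}(E_0)=\operatorname{Re}(g^{2k})$ with $g^{2k}=\prod_e(g^2)^{n_e}$. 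Summing over $k$ collapses each diagram's contribution into $\operatorname{Re}$ of a truncated copy of $\widetilde{\mathrm{Emb}}(G;g^2,2\epsilon)$ from (\ref{eq:defemb1}) --- i.e.\ (\ref{eq:defemb1}) with $\varphi_q$ replaced by $\widetilde\varphi$ and $(g,\epsilon)$ by $(g^2,2\epsilon)$ --- giving
\[ \sum_{k\ge2}\widetilde\varphi(2k\epsilon)\,T_{2k}(E_0)\,\frac{\mathrm{Paths}^0_{2k}}{(2W-1)^{k}} = \frac{2W}{2W-1}\sum_{j\ge1}(-1)^{j+1}\!\!\sum_{D\text{ of order }j}\!\!\operatorname{Re}\,\widetilde{\mathrm{Emb}}^{\,\mathrm{trunc}}(G_D;g^2,2\epsilon). \]

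To estimate the right-hand side I would split $\widetilde{\mathrm{Emb}}^{\,\mathrm{trunc}}=\widetilde{\mathrm{Emb}}-(\text{the part with }\sum_e n_e>W^\eta/(2\epsilon))$: the omitted part is at most $\exp(-c_q W^{2q\eta})$ times a fixed power of $W$ (from the $\bf R$- and $\bf n$-sums), while Proposition~\ref{p:dest} bounds the full $\widetilde{\mathrm{Emb}}(G_D;g^2,2\epsilon)$ by $(C(q)/|1-g^2|)^{E+1}(\log W/W)^{E-V+1}$. The unique genus-$1$ diagram gives, by part (2), one term $\le(C(q)/|1-g^2|)^2/W$; for genus $\gamma=E-V+1\ge2$, using $E\le3\gamma-2$ and $D(\gamma)\le(C\gamma)^\gamma$, the diagrams of genus $\gamma$ contribute $\le(C'(E_0,q)\,\gamma\log W/W)^\gamma$, and $\sum_{\gamma\ge2}$ of this is $O((\log W/W)^2)=O(1/W)$ (the series is dominated by $\gamma=2$, and the truncation forces $\gamma\le E\le\sum_e n_e\le W^{0.99+\eta}\ll W$, so its tail is negligible). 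Since $|1-g^2|=|1-e^{2i\theta_0}|$ is bounded below on $(-1+\delta,1-\delta)$, all constants are of the claimed form $C(E_0)$. Adding the three genuine $O(1/W)$ pieces --- the $k=1$ remainder, the $O(W^{1-k})$ corrections, and the genus-$1$ diagram --- to the $o(1/W)$ from higher genus and the negligible truncation tail yields (\ref{eq:l:ftilde}).

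\emph{The hard part} is the interchange of the sum over path lengths with the sum over diagrams, and the control of the resulting multiply-indexed series. The interchange requires absolute convergence --- which is exactly what the truncation at $W^\eta$ provides --- after which one must balance the super-exponentially small truncation tail against the super-exponentially many diagrams realisable by paths of length up to $\sim W^{0.99+\eta}$; this is why the hypotheses $q>50$ and $\eta$ slightly below $1/100$ are imposed (they force, roughly, $q\eta>1/2$ and $0.99+\eta<1$). A secondary point is that Section~\ref{s:diags} spells out only simple order-one diagrams, so one must check that general diagrams (higher order, non-simple) obey the analogous genus and counting bounds and that Proposition~\ref{p:dest}, stated for arbitrary multigraphs, applies to each of them.
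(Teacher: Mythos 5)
Your proposal is correct and follows essentially the same route as the paper's own proof: isolate the length-two term explicitly, reduce $\langle T_{2k}(H)(0,0)\rangle$ to $\mathrm{Paths}^0_{2k}$ via (\ref{eq:reprtn})--(\ref{eq:paths_paths0}), pass to the diagram expansion, recognise each diagram's contribution as $\widetilde{\mathrm{Emb}}(G;e^{2i\theta_0},2\epsilon)$, trade the truncation at $W^\eta$ against the super-exponential decay of $\varphi_q$ (the source of the conditions on $q$ and $\eta$), and sum over genus using Proposition~\ref{p:dest} together with $D(\gamma)\leq(C\gamma)^\gamma$ and $n_0\ll W$. The only differences are cosmetic --- you make the telescoping correction $O(W^{1-k})$ explicit where the paper writes $[\cdots]$, and you subtract the tail from the full $\widetilde{\mathrm{Emb}}$ where the paper extends the truncated sum to infinity via (\ref{eq:smallterm}) --- so no gap.
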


\begin{proof}

First,
\[\begin{split}
&\langle \widetilde{f}(H)(0, 0) \rangle \\
&\quad= 1 - \varphi_q(2\epsilon) (2E_0^2 - 1) \frac{W-1}{W-1/2}+ 2 \sum_{n = 2}^{n_0} \varphi_q(2n\epsilon) T_{2n}(E_0) \langle T_{2n}(H)(0, 0) \rangle~,
\end{split}\]
where $n_0 = \lfloor W^\eta / \epsilon \rfloor$.
By (\ref{eq:reprtn}),
\[\langle T_{2n}(H)(0, 0) \rangle
= \frac{1}{2(2W-1)^{n/2}} \,\, \text{Paths}^0_{2n} + \Big[\cdots\Big]~,
\]
where the last brackets enclose the terms of higher order which can be analysed similarly to the leading term,
Now apply the classification of paths described in Section~\ref{s:meth}. To simplify the notation,
we explicitly write the contribution of simple diagrams, and collect all the rest in the remainder term. This
yields:
\begin{equation}
\begin{split}  &\text{Paths}_{2n}^0 \, / \, (2W-1)^n \\
&\qquad= \sum_{\gamma \leq n_0} \sum_{\gamma(D) = \gamma} \sum_\mathbf{R} \sum_{n_1 + \cdots + n_E = n}
    \prod_{e = (u, v)}  \widetilde{P}_e\left\{ R_u \overset{n_e}{\rightsquigarrow} R_v \right\}  + \Big[\cdots\Big]~;
\end{split}
\end{equation}
therefore,
\begin{equation}\label{eq:sumn}
\begin{split}
&\sum_n \widetilde\varphi(2n\epsilon) T_{2n}(E_0) \frac{\text{Paths}^0_{2n}}{(2W-1)^n} \\
&\qquad= \Re \sum_{n\leq n_0} \widetilde\varphi(2n\epsilon) e^{2in\theta_0} \frac{\text{Paths}^0_{2n}}{(2W-1)^n} \\
&\qquad= \Re \sum_{\gamma \leq n_0} \sum_{\gamma(D) = \gamma} \sum_{n \leq n_0}
    \varphi_q(2n\epsilon) e^{2in\theta_0}\\
&\qquad\qquad\qquad\qquad\sum_\mathbf{R} \sum_{n_1 + \cdots + n_E = n} \prod_{e = (u, v)}
    \widetilde{P}_e\left\{ R_u \overset{n_e}{\rightsquigarrow} R_v \right\}  + \Big[\cdots\Big]~.
\end{split}
\end{equation}
It is not hard to see (cf.\ \cite{edge}) that
\begin{multline}\label{eq:smallterm}
\varphi_q(2n\epsilon)\sum_\mathbf{R} \sum_{n_1 + \cdots + n_E = n} \prod_{e = (u, v)}
    \widetilde{P}_e\left\{ R_u \overset{n_e}{\rightsquigarrow} R_v \right\} \\
    \leq C \exp\left[ - c n^{2q} \epsilon^{2q} \right]  \frac{(Cn)^{\frac{5\gamma-4}{2}}}{(\frac{5\gamma-4}{2})!}~;
\end{multline}
therefore the left-hand side of (\ref{eq:smallterm}) is very small for $n > n_0$ if
\begin{equation}\label{eq:restr.1}
\frac{2q-1}{2q} > 0.99~.
\end{equation}
Under this condition the sum over $n$ in (\ref{eq:sumn}) can be extended to infinity. Thus
\[\begin{split}
&\langle \widetilde{f}(H)(0, 0) \rangle \\
&\quad= 1 - \varphi_q(2\epsilon)(2E_0^2-1) + \sum_{\gamma \leq n_0} \sum_{D = (G, p), \, \gamma(D) = \gamma} \widetilde{\text{Emb}}(G) + \Big[\cdots\Big] + O(1/W)~.
\end{split}\]

For $D$ of order $\gamma$, Proposition~\ref{p:dest} and the subsequent remarks yield
\begin{equation}\label{eq:embest}
\left| \widetilde{\text{Emb}} (G) \right| \leq \left( \frac{C(q)}{|1-g|} \right)^{3\gamma} \,
    \left(\frac{\log W}{W}\right)^\gamma~,
\end{equation}
and the logarithmic factor is redundant for $\gamma=1$. As $n_0 = \lfloor W^\eta/\epsilon\rfloor \ll W$,
the estimate (\ref{eq:embest}) implies:
\begin{equation}\begin{split}
&\left| \langle \widetilde{f}(H)(0, 0) \rangle - 1 + \varphi_q(2\epsilon)(2E_0^2-1) \right| \\
&\quad\leq \frac{C}{|1-g|W}
    + \sum_{2 \leq \gamma \leq n_0} (C\gamma)^\gamma \, \left( \frac{C(q)}{|1-g|} \right)^{3\gamma} \, \frac{1}{W^\gamma}
    + O(1/W) \\
&\quad= O(1/W)~.
\end{split}\end{equation}

\end{proof}

The Chebyshev polynomials $T_n$ satisfy
\[ \max_{E \in [-W^{1/2}, W^{1/2}]} |T_n(E)| \leq (CW)^{n/2}~. \]
Therefore for $-W^{1/2} \leq E \leq W^{1/2}$
\[ \left| \sum_{n > n_0}  \varphi_q(n\epsilon) T_n(E_0) T_n(E) \right|
    \leq \sum_{n > n_0} C \exp \left[ - c n^{2q} \epsilon^{2q} + C n \ln W \right] = O(1/W) \]
as long as $2q\eta > \eta + 0.99$, or:
\begin{equation}\label{eq:restr.2}
q > \frac{\eta + 0.99}{2\eta}~.
\end{equation}
The spectrum of $H$ lies in
\[ [-\frac{W}{\sqrt{2W-1}}, \frac{W}{\sqrt{2W-1}}] \subset [-W^{1/2}, W^{1/2}]~, \]
thus, under the assumption (\ref{eq:restr.2}), the
conclusion of Lemma~\ref{l:ftilde} remains valid for $f_{E_0, \epsilon}$ in place of $\widetilde{f}$.

Next, one can replace $f_{E_0, \epsilon}$ with $f_{E_0, \epsilon}\big|_{[-1,1]}$ (using Lemma~\ref{l:fq},
the remark following it, and the fact \cite{edge} that the density of states is small outside $[-1, 1]$.)
Since
\[ h_{E_0, \epsilon} = f_{E_0, \epsilon}\big|_{[-1,1]} \Big/ (\pi \sqrt{1-E_0^2}) \geq 0 \]
is an approximate $\delta$-function of width $\epsilon$ at $E_0$, and $E_0, \epsilon$ are arbitrary
(subject to the constraint $\epsilon \geq W^{-0.99}$), we can replace it with any other approximate
$\delta$-function $\widetilde{h}_{E_0, \epsilon}$, such as the Stieltjes kernel that appears in the statement
of the theorem. Indeed, for any $\widetilde{\epsilon} = \widetilde{\epsilon}(W) \gg \epsilon(W)$,
and any $-1 < \widetilde{E}_0 < 1$, one can approximate
$\int \widetilde{h}_{\widetilde{E}_0, \widetilde{\epsilon}} \, dN$
by positive linear combinations of $\int h_{E_0, \epsilon}\,dN$ (with different $E_0$.)
\qed

\section{Two remarks}\label{s:rem}

{\bf i.} As we remarked in Section~\ref{s:meth}, the divergence of the ``na\"{\i}ve'' perturbation series
(obtained from the self-energy renormalisation procedure) follows from the divergent contribution of the spectral
edges. It is probable that a similar reason is responsible for the divergence of perturbation series also in other
problems, such as the density of states in the Anderson model (see Erd\H{o}s--Salmhofer--Yau \cite{ESY}.)

\vspace{1mm} \noindent
{\bf ii.} The restriction $\epsilon \geq W^{-0.99}$ in the main theorem appears for the following reason. It is an
artefact of the approach that (\ref{eq:1term}) can be justified for a given $\epsilon > 0$ only together with the first
$\approx 1/\epsilon$ terms of (\ref{eq:formal}). However, only the first $\approx W$ terms of (\ref{eq:formal})
are reasonably small (say, smaller than $1$), therefore we do not see how to make the current approach work for
$\epsilon \ll W^{-1}$. It is possible that the power $-0.99$ can be improved to $-1$ using a more careful
choice of the test function $\varphi$.

\end{document}